%% file: main.tex
\documentclass[a4paper]{article}
\usepackage{a4wide}
\usepackage[UKenglish]{babel}
\usepackage{graphicx,subfigure,caption}
\usepackage{algorithm,algorithmic}
\graphicspath{{./}{images/}}
\usepackage[colorlinks=true,linkcolor=blue,citecolor=red]{hyperref}
\usepackage{xcolor}
\usepackage{amssymb,amsthm,empheq,bbold}
\usepackage[inline]{enumitem}
\usepackage{calrsfs}\DeclareMathAlphabet{\pazocal}{OMS}{zplm}{m}{n}
\usepackage{authblk}
\usepackage{comment}
\usepackage[normalem]{ulem}
\usepackage{booktabs}
\usepackage{csquotes}
\usepackage[
    backend=biber,
    style=numeric-comp,
    doi=false,
    url=false,
    isbn=false,
    maxcitenames=3,
    maxbibnames=2,
    giveninits=true
]{biblatex}
\AtEveryBibitem{\clearlist{language}}
\renewbibmacro{in:}{}
\usepackage{tikz}
\usetikzlibrary{shadows}

\DeclareMathOperator{\sgn}{sgn}

\newcommand{\R}{\mathbb{R}}

\newtheorem{assumption}{Assumption}[section]

\newtheorem{theorem}[assumption]{Theorem}
\newtheorem{lemma}[assumption]{Lemma}
\newtheorem{corollary}[assumption]{Corollary}
\theoremstyle{remark}
\allowdisplaybreaks

\title{From graphons to real-world networks: kinetic opinion dynamics under selective media influence}
\author{Bertram Düring$^{\ast,\star}$}
\author{Martina Fraia$^\dagger$}
\author{Alessandro Licciardi$^{\dagger,\ddagger}$}
\affil{{\small $\ast$ Mathematics Institute, University of Warwick, Coventry, United Kingdom\\$\dagger$ Department of Mathematical Sciences ``G. L. Lagrange'', Politecnico di Torino, Turin, Italy\\
$\ddagger$ Istituto Nazionale di Fisica Nucleare (INFN), Sezione di Torino, Turin, Italy\\
$\star$ Author for correspondance (bertram.during@warwick.ac.uk)
}}
\date{}
		
\begin{document}
\maketitle
	
\begin{abstract}
\noindent We propose a kinetic model of opinion dynamics under selective media influence on both graphon-based and real-world networks. The media action, inspired by Hallin's theory of spheres, is incorporated through a model predictive control strategy designed to steer agents' opinions toward a desired target opinion. For the resulting Boltzmann-type description, we analyse the evolution of the moments and, in the quasi-invariant interaction limit, derive a Fokker--Planck-type equation together with a characterisation of its stationary states. We also prove exponential convergence to equilibrium in the Fourier metric. Numerical experiments are performed on networks generated by a Gaussian graphon and on real-world, single-issue Twitter networks, allowing us to investigate the role of control and interaction parameters, as well as the impact of the subset of agents subject to media influence. Using real-world social networks data to initialise opinions and infer the interaction structure, we then compare the dynamics obtained on the original networks with those produced by Gaussian graphons fitted to their adjacency matrices, thereby assessing the descriptive power of the graphon approach for real-world opinion dynamics.

\medskip

\noindent{\bf Keywords:} opinion dynamics, clustering, many-agent system, graphon, media influence, Boltzmann-type equations, kinetic theory, model predictive control, real-world social network, Hallin's spheres

\medskip

\noindent{\bf Mathematics Subject Classification:} 35Q20, 35Q70, 35Q91, 91D30
\end{abstract}

\section{Introduction}

Mathematical models of opinion dynamics have received growing attention in the past two decades.
Most models consider a large number of interacting agents whose microscopic, agent-level dynamics lead to the formation of complex macroscopic states, and, depending on the model, show effects like clustering, consensus or polarisation.

Kinetic models for opinion formation, which build on mathematical tools from statistical mechanics lead to generalisations of the classical Boltzmann equation for gas dynamics (mesoscopic scale) and approximate Fokker-Planck-type equations in quasi-invariant limits (macroscopic scale), see e.g.\  \cite{toscani2006kinetic,During:strongleaders,motsch14,zbMATH06350891,pareschi2013BOOK,During:inhomogeneous,TTZ18,albi2016optimal,zbMATH06553232,zbMATH06684855,zbMATH06684856,zbMATH06712347,garnier17,zbMATH06875734,zbMATH06892489,zbMATH06977672,zbMATH07205672,zbMATH06944946,albi2019boltzmann,zbMATH07146312,aylaj2020unified,bellomo2021whatislife,boscheri2021modeling,burger2021network,During:polling,during2026voter}. Alternative approaches in which agents are modelled by coupled stochastic differential equations lead to very similar macroscopic partial differential equations in a mean-field limit, see, e.g. \cite{garnier17,GGSP,wang2017noisy,Nugent_Gomes_Wolfram_2025}.

In this paper we are considering a kinetic model for opinion formation for a large number of interacting agents whose opinions are affected in two ways: (i) agent-to-agent binary opinion dynamics of bounded confidence type \cite{hegselmann02,toscani2006kinetic}, and (ii) media influence via a model predicative control which aims to shape agent's opinions towards a desired target opinion.

Mathematically, the closest kinetic model to our  opinion dynamics is \cite{APZ14}. There, a  kinetic opinion formation model with leaders is proposed, where the leaders’ strategy is determined by an optimisation problem based on an objective functional aimed at steering followers toward consensus. Similar to our paper, the minimisation of a cost functional is embedded into the microscopic interactions through a model predictive control.
We note that a PDE-constrained optimisation variation on the controlled leader-follower theme has been investigated in \cite{during2025pde} which formulates the leader–follower kinetic model as a PDE-constrained optimal control problem with a Fokker–Planck-type system as the state equation. A number of recent opinion formation papers using the mean-field approach \cite{coculescu2024opinion,helfmann2023modelling} also
explicitly model media and influencer dynamics.

There a two key differences between our approach and the above.
We consider explicitly the influence of the structure of the social network on the opinion dynamics using a {\it graphon} that describes the network structure in the continuum limit. Moreover, the action of the media in our model is  different from previous approaches, since we assume that the media can interact only with a subset of agents whose opinions lie in a set $S \subseteq [-1,1]$. The targeted action of the media on a subset of agents is motivated by the sociological media and political communication theory of {\it Hallin's spheres}. Independent of Hallin's theory it is reasonable to assume that media can only affect part of the discourse, as their influence will be more limited for extreme opinions because these are often tied to certain identities or insulated communities.

Hallin \cite{hallin1986} divides the world of political discourse into three concentric spheres which are schematically depicted in Figure~\ref{fig:spheres}, consensus, legitimate controversy, and deviance:\\
\begin{figure}
\centering
\begin{tikzpicture}

  \draw[
    rounded corners=20pt,
    fill=orange!40,
    draw=orange!70!black,
    line width=1.5pt
  ]
  (-7, -2.8) rectangle (7, 2.8);
  \node at (0, 2.3) {\large\textbf{Sphere of deviance}};
  \node[align=center, text width=14cm] at (0, -2.35)
    {\small extreme views, out of hand, taboo $\leadsto$ media influence negligible or ineffective};
  \draw[
    rounded corners=15pt,
    fill=green!35,
    draw=green!60!black,
    line width=1.5pt
  ]
  (-6.0, -1.9) rectangle (6.0, 1.9);
  \node at (0, 1.4) {\large\textbf{Sphere of legitimate controversy}};
  \node[align=center, text width=12cm] at (0, -1.5)
    {\small differing opinions coexist\\ $\leadsto$ most media activity, media influence can effectively shape opinions};

  \draw[
    rounded corners=10pt,
    fill=green!60!black,
    draw=green!30!black,
    line width=1.5pt
  ]
  (-4, -1.0) rectangle (4, 1.0);
  \node[white] at (0, 0.65) {\large\textbf{Sphere of consensus}};
  \node[white, align=center, text width=8cm] at (0, -0.45)
    {\small shared values, common sesnse, widely accepted\\ $\leadsto$ limited media activity, only reinforces shared values};
\end{tikzpicture}
\caption{Schematic view of Hallin's spheres of consensus, legitimate controversy and deviance.}
\label{fig:spheres}
\end{figure}
{\it Sphere of consensus.}
This region contains opinions that are widely accepted. In this case media communication does not aim at persuasion or opinion change, but rather reinforces already shared values and the action of the media does not produce a significant dynamical effect on the distribution of opinions hence we can consider just the classical opinion dynamics of consensus.\\
{\it Sphere of deviance.}
This sphere includes opinions that are considered out of hand for public discourse or taboo. Such views are typically ignored or explicitly discredited by the media and any control action becomes negligible or ineffective.\\
{\it Sphere of legitimate controversy.}
Between these two extremes lies the sphere of legitimate controversy, where differing opinions coexist within an admissible range. This is the region in which media activity is most relevant. In this regime, media influence can effectively modify opinions, as agents are neither fully aligned nor completely detached from the mainstream discourse.

As a consequence, our idea is to allow the targeted effect of the media only on the subset of the sphere of legitimate controversy and outside that subset solely consider the effect of interaction between agents.

To represent the network in our kinetic model, we adopt the {\it graphon} approach as recently discussed in \cite{franceschi2022,taricco2026homogeneousboltzmanntypeequationsdense}. In earlier works on kinetic models of opinion formation, various authors incorporated network features into the kinetic description by introducing the degree of an agent (or node) as an additional independent variable, see for example
\cite{albi2016optimal} which proposes a kinetic description of opinion dynamics on time-dependent large-scale networks, combining binary opinion interactions with creation and removal of links in the network, or
\cite{TTZ18} which studies kinetic models of opinion formation on social networks where the state depends on both opinion and agent connectivity.

In recent years, the continuum limit of large discrete networks has been explored via the so-called {\it graphon}, see \cite{borgs2018,lovasz12,cambridge-book,caron23,glasscock2019graphon}. In the continuum limit the adjacency matrix of the discrete network is replaced by a function $W(x,y)$ depending on the continuous graph positions (or labels) $x,y\in[0,1]$. The graphon has only recently been introduced in kinetic Boltzmann-type models of opinion formation in \cite{nurisso23,franceschi2022}. In \cite{nurisso23} a $0$-$1$-valued interaction kernel appears in the Boltzmann-type equation which models the presence or absence of edges (connections) between nodes (agents) in the network. In \cite{franceschi2022} the network is incorporated explicitly through graphons which feature as kernel in the Boltzmann-type equation. The authors in \cite{franceschi2022} also derive the corresponding limiting Fokker–Planck equation and study kinetic opinion formation on power-law, small-world, and $k-NN$ graphon structures.
See also \cite{taricco2026homogeneousboltzmanntypeequationsdense} for a recent, more detailed exposition of using the graphon in kinetic opinion formation models. For related research on mean-field equations on graphons we refer to \cite{aletti22,bonnet22,bayraktar23,coppini22}. 

This development is a milestone, as graphons allow to elevate kinetic models for opinion formation from {\it all-to-all} interactions (as is standard in kinetic theory) to preferential interactions described by connections between agents in the social network in an elegant and mathematically sound way. Hence, the graphon approach allows to advance these models towards a more realistic description of the real opinion dynamics which in reality crucially depends on the structure of the social network.

With the soundness and effectiveness of the graphon approach in kinetic models for opinion formation established \cite{franceschi2022,taricco2026homogeneousboltzmanntypeequationsdense}, an important question is whether the graphon has sufficient descriptive power to capture the richness of opinion dynamics as in real-world social networks. Consequently, one of the aims of this paper is to turn to this important question.

The main contributions of this paper are:
\begin{itemize}
    \item we propose a new kinetic model for opinion dynamics under selective media influence on a network represented by a graphon, and derive the governing Boltzmann-type and Fokker-Planck-type equations,
    \item we analyse the model in terms of moment evolution and equilibration,
    \item we discuss and characterise steady states of the Fokker-Planck equation,
    \item we study numerically the opinion dynamics using the graphon network description as well as the same dynamics evolving on real-world social networks, and
    \item we investigate numerically the descriptive power of the graphon approach by fitting synthetic Gaussian graphons to real networks' adjacency matrices.
\end{itemize}
This paper is structured as follows: In Section~\ref{sec:micro} we introduce the model for opinion dynamics under selective media influence. The action of the media is inspired by the theory of Hallin's spheres and uses a model predictive control approach, discussed in detail in Section~\ref{sec:MPC} to steer agents' opinions towards a desired target opinion. We analyse the moments of the Boltzmann-type equation in Section~\ref{sec:momentevol}. In the quasi-invariant limit, we derive a Fokker-Planck-type equation in Section~\ref{sec:FP} and discuss the characterisation of its stationary solution in Section~\ref{sec:stationary}. In Section~\ref{sec:equilibration} we prove a result on the exponential convergence of solutions in the Fourier metric. Results of our numerical experiments are reported in Section~\ref{sec:numerics}. First, we show opinion dynamics under selective media influence in a range of scenarios with the network given by a Gaussian graphon, and explore the effects of control and interaction parameters as well as the role of the set in which the media effect is present. Second, we turn to real-world social network data in the form of single-issue Twitter networks. We use the real data to initialise agents' opinions and determine the network structure, before running the opinion dynamics of our model with selective media influence. Finally, we fit a synthetic Gaussian graphons to the real networks' adjacency matrices, and compare the resulting dynamics in both settings, to explore the descriptive power of the graphon approach.

\section{Microscopic opinion formation model}
\label{sec:micro}
At the microscopic (or agent) level, we consider a finite size system of $N$ agents. The agents form a social network with the agents corresponding to the nodes and their connections to the edges of a graph. This social network (graph) is given and fixed in time, the connections (edges) are weighted with weights in $[0,1]$. The state of any two agents is characterised by two variables: their continuous opinions $w, w_*\in [-1,1]$ and their (fixed) labels $i,j \in \{1,\dots,N\}$ in the social network.

The evolution of agents' opinions is due to  two effects:
\begin{itemize}
\item
binary interaction between connected agents (corresponding to connected nodes in the graph) which is driven by compromise combined with random fluctuations, as in opinion formation models of bounded confidence type, and
\item
interaction between agents and the media.
\end{itemize}
We discuss both effects separately in the following.

The opinion dynamics between agents is of bounded confidence type. Following \cite{toscani2006kinetic,pareschi2013BOOK} we consider that agents change their opinions instantaneously through binary interaction, i.e. when two agents with pre-interaction opinions $w, w_*$ interact, their post-interaction opinions $w', w_*'$ are given by 
\begin{align}
\label{post-int-op1}
w'&= w + \gamma P(w,w_*)(w_*-w) + \eta D(w),\\
\label{post-int-op2}
w_*'&= w_* + \gamma P(w_*,w)(w-w_*) + \eta_* D(w_*),
\end{align}
where $\gamma \in (0,1/2)$ is a given compromise parameter and $\eta, \eta_*$ are random variables with zero mean and variance $\sigma^2 > 0$, modelling the random fluctuations. This random change is modulated by the function $D(\cdot)$ which ensures that post-interaction opinions remain within the interval $[-1,1]$, see Appendix~\ref{app:bounded}. A common choice \cite{toscani2006kinetic,During:strongleaders,During:inhomogeneous} for this function is 
$$
D(w)=(1-w^2)^\alpha,
$$
for some $\alpha >0.$
The compromise function $P(\cdot, \cdot)$ is localising compromise which is a key feature of bounded confidence models. Typical choices for $P$ \cite{toscani2006kinetic,During:strongleaders,During:inhomogeneous} are characteristic functions of the set $\{|w-w_*|\le r\}$ with some given radius $r$ or a hyperbolic tangent function. Alternatively, choosing $P\equiv 1$ allows compromise between agents across the whole interval $[-1,1].$

The change of agents' opinions due to interaction with the media is restricted to agents with opinions within a given subset $S\subset [-1,1]$ as suggested by Hallin's spheres of controversy discussed above, i.e.\ $S$ represents the sphere of legitimate controversy. The action of the media is modelled as a time-dependent control $z$ which aims to drive opinions towards a given desired target opinion $w_d$. 
An agent's post-interaction opinion after interacting with the media is given by
\begin{align}
\label{post-int-media}
w''= w + \delta \chi_{\{w \in S\}} z, 
\end{align}
where $\delta >0 $ represents the strength of the influence of the media and  
$$\chi_{\{w \in S\}} = \begin{cases}
    1, \text{ if } w\in S,\\
    0, \text{ otherwise,}
\end{cases}$$ is the characteristic function that restricts the action of the media to the sphere of controversy or another subset of opinions. 
This localised, external media influence is reminiscent of the interaction with a background (thermal bath) in standard kinetic theory. It leads to new and non trivial interplay between the endogenous opinion interactions among agents and exogenous media control. In particular, depending on the strength of the control of the media and the size and shape of $S$, this effect can counteract against the symmetry of classical compromise models, similar as seen in \cite{franceschi2022} where an external effect is introduced to break the consensus in the system.

The control $z(t)$ in \eqref{post-int-media} is chosen via a model predictive control approach, the details of which are discussed in the next section.

\section{Model predictive control action of the media}
\label{sec:MPC}
In this section we present a Model Predictive Control (MPC) approach to model the action of the media on the system of agents. Having in mind the notion of Hallin's spheres as discussed in the introduction, it is reasonable to consider $S$ as the sphere of legitimate controversy and define it as a symmetric interval 
\begin{align}
\label{eq:choiceS}
S = [-r_2,-r_1] \cup [r_1,r_2], \qquad 0 \le r_1 < r_2 \le 1.
\end{align}
The sphere of consensus is then associated with $(-r_1,r_1)$, while $[-1,-r_2)\cup (r_2,1]$ is the sphere of deviance, see Figure~\ref{fig:intervals}.
Alternatively, other subsets of $[-1,1]$ or the whole interval $[-1,1]$ can be chosen as the set $S$.
\begin{figure}
    \centering\begin{tikzpicture}

  \pgfmathsetmacro{\rone}{0.2}   
  \pgfmathsetmacro{\rtwo}{0.7}   
  \pgfmathsetmacro{\sc}{5}       

  \pgfmathsetmacro{\xA}{\sc}           
  \pgfmathsetmacro{\xB}{\rtwo*\sc}     
  \pgfmathsetmacro{\xC}{\rone*\sc}     
  \pgfmathsetmacro{\bh}{0.5}           


  \filldraw[fill=orange!40, draw=orange!70!black, line width=0.8pt]
    ({-\xA},{-\bh}) rectangle ({-\xB},{\bh});

  \filldraw[fill=green!35, draw=green!60!black, line width=0.8pt]
    ({-\xB},{-\bh}) rectangle ({-\xC},{\bh});

  \filldraw[fill=green!60!black, draw=green!30!black, line width=0.8pt]
    ({-\xC},{-\bh}) rectangle ({\xC},{\bh});

  \filldraw[fill=green!35, draw=green!60!black, line width=0.8pt]
    ({\xC},{-\bh}) rectangle ({\xB},{\bh});

  \filldraw[fill=orange!40, draw=orange!70!black, line width=0.8pt]
    ({\xB},{-\bh}) rectangle ({\xA},{\bh});

  \foreach \x in {-\xA, -\xB, -\xC, 0, \xC, \xB, \xA} {
    \draw[black, line width=0.6pt] ({\x},{-\bh}) -- ({\x},{-\bh-0.22});
  }

  \node[below, font=\small] at ({-\xA},{-\bh-0.22}) {$-1$};
  \node[below, font=\small] at ({-\xB},{-\bh-0.22}) {$-r_2$};
  \node[below, font=\small] at ({-\xC},{-\bh-0.22}) {$-r_1$};
  \node[below, font=\small] at (0,       {-\bh-0.22}) {$0$};
  \node[below, font=\small] at ({\xC},  {-\bh-0.22-0.04}) {$r_1$};
  \node[below, font=\small] at ({\xB},  {-\bh-0.22-0.04}) {$r_2$};
  \node[below, font=\small] at ({\xA},  {-\bh-0.22}) {$1$};

  \pgfmathsetmacro{\midOL}{(-\xA-\xB)/2}
  \pgfmathsetmacro{\midIL}{(-\xB-\xC)/2}
  \pgfmathsetmacro{\midOR}{(\xA+\xB)/2}
  \pgfmathsetmacro{\midIR}{(\xB+\xC)/2}

  \node[above, font=\small\bfseries] at ({\midOL},{\bh +0.07}) {Deviance};
  \node[above, font=\small\bfseries] at ({\midIL},{\bh}) {Controversy};
  \node[above, font=\small\bfseries] at (0,        {\bh +0.07}) {Consensus};
  \node[above, font=\small\bfseries] at ({\midIR}, {\bh}) {Controversy};
  \node[above, font=\small\bfseries] at ({\midOR}, {\bh +0.07}) {Deviance};
  \node[font=\small\bfseries] at ({\midIL}, 0) {$S$};
  \node[font=\small\bfseries] at ({\midIR}, 0) {$S$};
  \pgfmathsetmacro{\arrowstart}{\xA}
  \pgfmathsetmacro{\arrowend}{\xA+1}
  \draw[->, >=stealth, line width=0.5pt]
    ({\arrowstart}, -\bh) -- ({\arrowend}, -\bh)
    node[right, font=\small] at (\xA+0.5,-\bh-0.22){opinion $w$};
\end{tikzpicture}
\caption{Opinion interval $[-1,1],$ divided into spheres of consensus, deviance and legitimate controversy.}
\label{fig:intervals}
\end{figure}

The time-dependent control $z$ is chosen by solving a constrained optimisation problem. The media has the aim to steer the agents' opinions towards a desired opinion $w_d \in [-1,1]$. The choice of the cost functional reflects the trade-off between the alignment of the agents' opinions with the target opinion $w_d$ and the cost associated with the media intervention (for example related to the advisory costs, payments of influencers in the social network, etc.). Hence in the cost functional we consider the sum of two terms: a term that penalises the deviation of the post-interaction opinion $w''$ from the desired opinion $w_d$ in a least-squares sense and a second term, that represents a penalisation of the control effort, in which the parameter $\tilde{\lambda} > 0$ balances the cost of the intervention. In particular, larger values of $\tilde{\lambda} > 0$ correspond to more costly media actions leading to a weaker control, while smaller values favour more powerful interventions. Mathematically, the penalty term prevents unrealistic large controls and guarantees well-posedness of the minimisation problem.

Moreover, differently from model predictive control approaches employed previously for kinetic models of opinion formation (see, e.g. \cite{franceschi2022}), we introduce a constraint which ensures that the controlled opinions remain within the region of action of the media $S$. It is natural to assume that the media, which only acts in $S$, want the post-interaction opinions to remain in $S$ so that them remain within the media's influence.
Mathematically, the choice of the best media action becomes a convex optimisation problem with non linear inequality constraints.
Specifically, in each interaction \eqref{post-int-media}, the control is chosen by solving the following constrained optimisation problem:
\begin{align}
\label{eq:opt_problem_lambda}
z^* =
\begin{cases}
\displaystyle
\arg\min_{z\in \mathbb{R}} J(z)
=
\frac{1}{2}(w'' - w_d)^2
+
\frac{1}{2}\tilde{\lambda} z^2,
\\[6pt]
w'' = w + \delta z,
\\[6pt]
r_1 \le |w''| \le r_2,
\end{cases}
\end{align}
where $w_d \in S \subset [-1,1]$ is the desired opinion and $\tilde{\lambda} > 0$ is the penalty coefficient for the cost of the media intervention.

We notice that the quadratic structure of both terms in $J$ defined in \eqref{eq:opt_problem_lambda} ensures that the cost functional is strictly convex in $z$, which implies the uniqueness of the solution in the unconstrained setting. This property is crucial in view of deriving explicit solutions for the media control.

Exploiting the media post-interaction opinion \eqref{post-int-media} in the first equation in \eqref{eq:opt_problem_lambda} the cost functional becomes:
\begin{align*}
J(z) = \frac{1}{2}(w+\delta z-w_d)^2 +
\frac{1}{2}\tilde{\lambda} z^2,
\end{align*}
subject to
\begin{align}
\label{eq:constrain_square}
r_1^2 \le (w+\delta z)^2 \le r_2^2.
\end{align}
which is equivalent to the inequality constraint in  \eqref{eq:opt_problem_lambda}. 
We notice that the constraints introduced within the optimisation problem ensure that $w''$ remains within a subset of $[-1,1]$, and therefore it is not necessary to introduce any upper bound for $\delta$  to obtain a value of $w''$ that lies within the admissible set of opinions.

To solve this constrained convex problem, we consider the \textit{Karush--Kuhn--Tucker (KKT) conditions}, which provide necessary and sufficient optimality conditions due to the convexity of both the objective functional and the admissible set.
This method introduces Lagrange multipliers associated with the constraints and leads to a system of equations and complementary conditions that identify whether the optimal solution lies in the interior of the admissible set or on its boundary. This structure is particularly advantageous, as it leads to an explicit characterisation of the optimal control in terms of a projection of the unconstrained solution onto the admissible region. Therefore, the KKT approach not only guaranties optimality, but also provides a constructive way to compute the control in closed form.

\begin{lemma}
\label{lem:wellposed_kkt}
Let $w, w_d \in S \subseteq [-1,1]$, $\delta > 0$, and $\tilde{\lambda} > 0$.
Under the hypothesis that the admissible set,
\begin{align*}
\mathcal{A} \coloneqq \{ z \in \mathbb{R} : r_1^2 \le (w+\delta z)^2 \le r_2^2 \},
\end{align*}
is not empty, consider the constrained optimisation problem
\begin{align}
\begin{cases}
\arg\min_{z \in \mathbb{R}} \;
J(z) = \frac{1}{2}(w+\delta z - w_d)^2 +
\frac{1}{2}\tilde{\lambda} z^2,\\
r_1^2 \le (w+\delta z)^2 \le r_2^2,
\label{eq:opt_problem}
\end{cases}
\end{align}
with $0 \le r_1 < r_2 \le 1$. Then Problem~\eqref{eq:opt_problem} admits a unique minimiser $z^* \in \mathbb{R}$ which is given by
\begin{align*}
z^* = \frac{1}{\delta}\Big(\mathrm{Proj}[r_1,r_2](|w''_{\mathrm{free}}|) \sgn(w''_{\mathrm{free}}) -w\Big)
\end{align*}
with
$$ w''_{\mathrm{free}} =\frac{\tilde{\lambda}w + \delta^2w_d}{\tilde{\lambda} + \delta^2}.$$
\end{lemma}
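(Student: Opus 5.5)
The plan is to eliminate the control in favour of the post-interaction opinion and reduce Problem~\eqref{eq:opt_problem} to a one-dimensional nearest-point problem. Since $\delta>0$, the map $z\mapsto u:=w+\delta z$ is an affine bijection of $\mathbb{R}$, with inverse $z=(u-w)/\delta$. Under this map $\mathcal{A}$ corresponds to
\[
U:=\{u\in\mathbb{R}: r_1\le |u|\le r_2\},
\]
which is nonempty. The cost becomes
\[
\tilde J(u)=\tfrac12(u-w_d)^2+\tfrac{\tilde\lambda}{2\delta^2}(u-w)^2 .
\]
Completing the square gives
\[
\tilde J(u)=\frac{\tilde\lambda+\delta^2}{2\delta^2}\,\bigl(u-w''_{\mathrm{free}}\bigr)^2+C,
\]
where $C$ does not depend on $u$. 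This is a routine computation: expand, collect the coefficients of $u^2$ and $u$, and read off the vertex $(\tilde\lambda w+\delta^2 w_d)/(\tilde\lambda+\delta^2)$. It follows that minimising $J$ over $\mathcal{A}$ is equivalent to finding the point of $U$ closest to $w''_{\mathrm{free}}$. This also shows that the unconstrained minimiser is exactly $w''_{\mathrm{free}}$.

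The second step is to compute this nearest point. The set $U$ is symmetric and is the union of the two closed intervals $[r_1,r_2]$ and $[-r_2,-r_1]$; when $r_1=0$ these merge into $[-r_2,r_2]$. I will treat the cases $w''_{\mathrm{free}}>0$ and $w''_{\mathrm{free}}<0$ separately.

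If $w''_{\mathrm{free}}>0$, every point of the negative component $[-r_2,-r_1]$ is strictly farther from $w''_{\mathrm{free}}$ than its mirror image in $[r_1,r_2]$. The minimiser therefore lies in the convex interval $[r_1,r_2]$. On that interval the strictly convex quadratic $(u-w''_{\mathrm{free}})^2$ has the unique minimiser $\mathrm{Proj}[r_1,r_2](w''_{\mathrm{free}})$, which is the clamp $\min\{\max\{w''_{\mathrm{free}},r_1\},r_2\}$. The case $w''_{\mathrm{free}}<0$ follows by symmetry and yields $-\mathrm{Proj}[r_1,r_2](|w''_{\mathrm{free}}|)$.

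Combining the two cases gives
\[
u^*=\mathrm{Proj}[r_1,r_2](|w''_{\mathrm{free}}|)\,\sgn(w''_{\mathrm{free}}),
\]
and transforming back via $z^*=(u^*-w)/\delta$ gives the claimed formula. Existence is immediate, because $\tilde J$ is continuous and coercive and $U$ is closed and nonempty. Equivalently, one could write the KKT conditions with two multipliers for the constraints $(w+\delta z)^2\ge r_1^2$ and $(w+\delta z)^2\le r_2^2$. Checking complementary slackness case by case, according to whether $|w''_{\mathrm{free}}|<r_1$, lies in $[r_1,r_2]$, or exceeds $r_2$, reproduces the same three regimes. I would mention this only as a consistency check.

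The main obstacle is uniqueness. When $r_1>0$ the feasible set $U$ is \emph{not} convex, so the convexity argument does not apply directly, and uniqueness must come from the sign argument above. That argument breaks down exactly when $w''_{\mathrm{free}}=0$: then $\pm r_1$ are both minimisers, and the formula would give $u^*=0\notin U$ since $\sgn(0)=0$. I would handle this degenerate situation explicitly.

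If $w$ and $w_d$ lie in the same component of $S$, then $w''_{\mathrm{free}}$ is a strict convex combination of them and so lies in that same component. Hence it is nonzero, in fact $u^*=w''_{\mathrm{free}}$, and no projection is active. If $w$ and $w_d$ lie in opposite components, then $w''_{\mathrm{free}}=0$ can occur only on the single parameter set $\tilde\lambda w=-\delta^2 w_d$. I would either state uniqueness under the assumption $w''_{\mathrm{free}}\neq0$ (or $r_1=0$), or adopt a tie-breaking convention for $\sgn(0)$, for instance taking it to be $\sgn(w)$. In either case the formula remains a minimiser, and uniqueness holds in all other cases.
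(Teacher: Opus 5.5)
Your proof is correct, and it takes a different and more careful route than the paper. The paper stays within convex optimisation. It gets uniqueness from the strict convexity of $J$ on $\mathbb{R}$ ($J''=\delta^2+\tilde\lambda>0$) and existence from compactness of $\mathcal{A}$ plus Weierstrass. It then writes the KKT system with multipliers $\mu_1,\mu_2$, asserting that it is necessary and sufficient ``due to the convexity of both the objective functional and the admissible set.'' From ``either $z_{\mathrm{free}}$ is feasible or a constraint is active'' it simply reads off the projection formula.

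You instead substitute $u=w+\delta z$ and complete the square, which turns the problem into a nearest-point problem for $w''_{\mathrm{free}}$ on $U=\{r_1\le|u|\le r_2\}$. You then use the reflection $u\mapsto-u$ to confine the minimiser to the component on the side of $w''_{\mathrm{free}}$, where the clamp is the unique minimiser.

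This reflection step is exactly what the paper's route is missing. For $r_1>0$, $\mathcal{A}$ is a union of two disjoint intervals, so strict convexity of $J$ gives no uniqueness, and KKT is only necessary. Concretely, for $w''_{\mathrm{free}}\neq0$ the point $w+\delta z=-\sgn(w''_{\mathrm{free}})\,r_1$ satisfies the paper's KKT system with $\mu_1=0$ and $\mu_2=\frac{(\tilde\lambda+\delta^2)(r_1+|w''_{\mathrm{free}}|)}{2\delta^2 r_1}>0$. Yet it is strictly worse than its mirror image. So your KKT ``consistency check'' would not cleanly reproduce three regimes: it also yields this spurious point, which only your comparison argument removes. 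The KKT framework is the more standard template for other constraint sets, but here it becomes a proof only once your reflection argument is added.

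Your degenerate-case analysis is also right, and it exposes a defect of the statement rather than of your proof. Take $r_1=0.2$, $r_2=0.8$, $w=-0.5$, $w_d=0.5$, $\delta=\tilde\lambda=1$. Then $w''_{\mathrm{free}}=0$, and the feasible controls $z=0.3$ and $z=0.7$ both attain the minimum $J=0.29$. The stated formula instead returns $z^*=0.5$, with $w+\delta z^*=0\notin S$. The paper's proof does not notice this case.

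So the lemma holds exactly in the form you propose. Uniqueness and the formula are valid when $w''_{\mathrm{free}}\neq0$ or $r_1=0$. In the remaining case a sign convention for $\sgn(0)$ gives a minimiser, but not a unique one.
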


\begin{proof}

The cost functional is quadratic and satisfies
\begin{align*}
J''(z) = \delta^2 + \tilde{\lambda} > 0,
\end{align*}
hence $J$ is strictly convex on $\mathbb{R}$. Therefore, any minimiser, if it exists, is unique.

Since the mapping $z \mapsto w+\delta z$ is affine and continuous, and $\mathcal{A}$ is a non-empty, closed subset of $\mathbb{R}$ then $\mathcal{A}$ is compact. Since the functional $J$ is continuous, then the existence of a minimiser follows from the Weierstrass theorem:
$$\exists z_* \in \mathcal{A} \text{ such that } J(z_*)= \min_{\mathcal{A}} J.$$

Rewriting the constraints as
\begin{align*}
g_1(z) &= (w+\delta z)^2 - r_2^2 \le 0, \\
g_2(z) &= r_1^2 - (w+\delta z)^2 \le 0,
\end{align*}
we can introduce the Lagrange multipliers $\mu_1,\mu_2$ associated with the upper and lower inequality constraint, respectively, and rewrite Problem~\eqref{eq:opt_problem} as follows:
\begin{align*}
\begin{cases}
\delta(w+\delta z^* - w_d)+\tilde{\lambda} z^*+2\delta(\mu_1 - \mu_2)(w+\delta z^*)= 0,\\
g_1(z^*) \le 0, \quad g_2(z^*) \le 0,\\
\mu_1 \ge 0, \quad \mu_2 \ge 0,\\
\mu_1 g_1(z^*) = 0, \quad \mu_2 g_2(z^*) = 0.
\end{cases}
\end{align*}
If the unconstrained minimiser
\begin{align*}
z_{\mathrm{free}} = \frac{\delta(w_d - w)}{\tilde{\lambda} + \delta^2}
\end{align*}
satisfies the constraint, then both multipliers $\mu_1, \mu_2$ vanish and $z^* = z_{\mathrm{free}}$ with
$$ w''_{\mathrm{free}} = w+ \delta z_{\mathrm{free}} = w +  \frac{\delta^2(w_d - w)}{\tilde{\lambda} + \delta^2} = \frac{\tilde{\lambda}w + \delta^2w_d}{\tilde{\lambda} + \delta^2}.$$
Otherwise, at least one constraint is active, and the optimal solution lies on the boundary of the admissible set, i.e.
\begin{align*}
|w+\delta z^*| \in \{r_1, r_2\}.
\end{align*}
Thus, we can conclude that the optimal post-interaction opinion $w''_*$ is the projection of $w_{\mathrm{free}}$ on the subset $S$.

In the case of the symmetric definition \eqref{eq:choiceS} of the subset $S$ as according to Hallin's sphere theory we have 
\begin{align}
\label{eq:opt_w''}
w''_* = \frac{\mathrm{Proj}[r_1,r_2](|w''_{\mathrm{free}}|)}{|w''_{\mathrm{free}}|}w''_{\mathrm{free}},
\end{align}
in which the $|\cdot|$ represents the absolute value and the projection is defined as:
\begin{align*}
\mathrm{Proj}[r_1,r_2](x)=
\begin{cases}
r_1, \quad  \text{ if } x<r_1,\\
x, \quad  \text{ if } r_1 \leq x \leq r_2,\\
r_2, \quad  \text{ if } x>r_2.
\end{cases}
\end{align*}
We note that all calculations remain valid if we consider a general subset $S$ that is not necessarily the above chosen symmetric version of $S$ given in \eqref{eq:choiceS}, and then we can just write the optimal post-interaction opinion as:
$$w''_* = \mathrm{Proj}[S](|w''_{\mathrm{free}}|)\sgn(w''_{\mathrm{free}}).$$
To conclude, going back to the symmetric definition of $S$ given in \eqref{eq:choiceS}, the optimal control is defined as 
\begin{align*}
z^*=\frac{1}{\delta}
\left(w''_*-w \right),
\end{align*}
which is equivalent to
\begin{align*}
z^* &= \frac{1}{\delta}\left(
\mathrm{sgn}(w''_{\mathrm{free}})\, \max\left\{r_1,\,\min\left\{|w''_{\mathrm{free}}|,\,r_2\right\}\right\}-w\right)\\
&= \frac{1}{\delta}\Big(\frac{\mathrm{Proj}[r_1,r_2](|w''_{\mathrm{free}}|)}{|w''_{\mathrm{free}}|}w''_{\mathrm{free}} -w\Big)\\
&= \frac{1}{\delta}\Big(\mathrm{Proj}[r_1,r_2](|w''_{\mathrm{free}}|) \sgn(w''_{\mathrm{free}}) -w\Big).
\end{align*}
This provides the complete characterisation of the optimal control.
\end{proof}

\section{Mesoscopic description and evolution of the first moment}
\label{sec:momentevol}

For a very large number of interacting agents, using standard methods of kinetic theory, we can now derive the Boltzmann-type equation associated with the microscopic interactions \eqref{post-int-op1}, \eqref{post-int-op2}, \eqref{post-int-media}.
The state of the system is then described by the probability density function
$$f = f(w,x,t),$$
where $w \in [-1,1]$ denotes the opinion variable and $x \in \Omega \subseteq [0,1]$ represents the continuous label (or position) on the graph. The quantity $f(w,x,t)\,dw\,dx$ gives the fraction of agents at time $t$ with opinion in $[w,w+dw]$ and position in $[x,x+dx]$.
The time evolution of $f$, for any smooth test function $\varphi = \varphi(w,x)$, is governed by a Boltzmann-type equation in weak form:
\begin{align}
\frac{d}{dt}\int_0^1 \int_{-1}^1 \varphi(w,x) f(w,x,t) \, dw \, dx 
= \frac{1}{\tau_0}\bigl ( \mathcal{Q}_0(f,f), \varphi \bigr ) + \frac{1}{\tau_1} \bigl ( \mathcal{Q}_1(f), \varphi \bigr ),
\label{eq:boltz}
\end{align}
The operator $\mathcal{Q}_0$ is associated with the evolution due to the binary opinion interaction between agents. The kernel of interaction for this opinion dynamics is expressed by a graphon $W(x,y)$.

The \textit{graphon} is a Lebesgue-measurable function $$W:\Omega^2 \subseteq[0,1]^2 \to \R_+,$$ that constitutes a powerful instrument to describe the limits of large dense networks \cite{borgs2018,franceschi2022,glasscock2019graphon}. From a practical point of view, $W(x,y)$ can be interpreted as the continuous analogue of the adjacency matrix and represents the probability of interaction between two agents having labels (or positions) $x$ and $y$, respectively. The use of graphons in kinetic theory provides a more general description than earlier approaches to networks, in which each agent was associated with a random variable corresponding to their connectivity (e.g.\ the number of connections) \cite{albi2024datadriven,albi2016optimal,albi17,LoyRaviolaTosin_2022}. For more details, we refer the reader to \cite{borgs2018,franceschi2022,glasscock2019graphon}.

The operators (in weak form) on the right hand side of \eqref{eq:boltz} are given by:
\begin{align}
\bigl ( \mathcal Q_0(f,f), \varphi \bigr )
&=
\int_0^1 \int_0^1 \int_{-1}^1 \int_{-1}^1 
W(x,y)\big\langle\varphi(w') - \varphi(w)\big\rangle 
 f(w,x,t) f(w_*,y,t) \, dw \, dw_* \, dx \, dy,
 \label{eq:q_0}
\end{align}
where $\langle\cdot\rangle$ denotes taking the expectation with respect to the random variables occurring in \eqref{post-int-op1},\eqref{post-int-op2} and 
\begin{align}
\bigl ( \mathcal Q_1(f), \varphi \bigr )
=
\int_0^1 \int_{-1}^1 
\big(\varphi(w''_*) - \varphi(w)\big) f(w,x,t) \, dw \, dx,
\label{eq:q_1}
\end{align}
in which $w'$ and $w''_*$ are the post-interaction opinions \eqref{post-int-op1} and \eqref{post-int-media}, with the optimal media policy as given in Section~\ref{sec:MPC}. 

From equation \eqref{eq:boltz} we can analyse the different moments of the system. 
We define the local density
\begin{align*}
\rho(x,t) := \int_{-1}^1 f(w,x,t)\,dw.
\end{align*}
Choosing $\varphi(w,x) = 1$, we immediately obtain
\begin{align*}
\frac{d}{dt}\int_0^1 \rho(x,t)\,dx = 0,
\end{align*}
since both interaction terms vanish due to $\varphi(w')= \varphi(w'')=\varphi(w)$. Therefore, the total mass is conserved in time.

As is typical in bounded confidence opinion formation models, the first moment, the mean opinion, is not conserved in time in general, see e.g.\ \cite{toscani2006kinetic,During:strongleaders,pareschi2013BOOK}. However, under mild symmetry condition on the graphon $W$, we have the following result.
Note that symmetry of $W$ holds if the social network  consists of a non-directed graph.

\begin{lemma}
\label{lem:mean_control}
Let $f = f(w,x,t) \in L^1([-1,1]\times \Omega \times \R_+)$ be a probability density and $W$ symmetric, i.e.\ $W(x,y)=W(y,x)$ for all $x,y\in\Omega\subseteq[0,1]$. Denote by
\begin{gather*}
m(t)  \coloneq \int_0^1 \int_0^1 w f(w,x,t)\,dw\,dx,\quad
M_S(t)  \coloneq\int_0^1 \int_S f(w,x,t)\,dw\,dx, \\
m_S(t)  \coloneq \int_0^1 \int_S w f(w,x,t)\,dw\,dx,
\end{gather*}
the total opinion mean of the system and the mass and the opinion mean in the subset $S$, respectively.
If the compromise function of the opinion exchange $P$ in \eqref{post-int-op1},\eqref{post-int-op2} is symmetric, i.e.\ $P(w,w_*) = P(w_*,w)$ for all $w,w_* \in [-1,1]$, and 
\begin{align*}
w_d = \frac{m_S(t)}{M_S(t)},
\end{align*} then the total mean opinion of the system $m(t)$ is conserved in time.
\end{lemma}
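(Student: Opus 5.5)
The plan is to test the weak Boltzmann equation \eqref{eq:boltz} with $\varphi(w,x)=w$. This gives
\begin{align*}
\frac{d m}{dt} = \frac{1}{\tau_0}\bigl(\mathcal Q_0(f,f),w\bigr)+\frac{1}{\tau_1}\bigl(\mathcal Q_1(f),w\bigr),
\end{align*}
and I will show that each term vanishes separately.

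\emph{Interaction term.} By \eqref{post-int-op1}, and because the noise has zero mean, $\langle w'-w\rangle=\gamma P(w,w_*)(w_*-w)$. Hence
\begin{align*}
\bigl(\mathcal Q_0(f,f),w\bigr)=\gamma\int_0^1\!\!\int_0^1\!\!\int_{-1}^1\!\!\int_{-1}^1 W(x,y)P(w,w_*)(w_*-w)f(w,x,t)f(w_*,y,t)\,dw\,dw_*\,dx\,dy .
\end{align*}
Swap the integration variables $(w,x)\leftrightarrow(w_*,y)$. The symmetry of $W$ and of $P$ leaves the kernel $W(x,y)P(w,w_*)f(w,x,t)f(w_*,y,t)$ invariant, while $w_*-w$ changes sign. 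The integral therefore equals its own negative and is zero. This step is routine. Integrability holds because $f\in L^1$, $W$ is bounded (or at least integrable against $f\otimes f$) and $|w|\le1$.

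\emph{Media term.} Since $\chi_{\{w\in S\}}=0$ outside $S$, there $w''=w$, so only opinions in $S$ contribute:
\begin{align*}
\bigl(\mathcal Q_1(f),w\bigr)=\int_0^1\!\!\int_S (w''_*-w)f(w,x,t)\,dw\,dx .
\end{align*}
By Lemma~\ref{lem:wellposed_kkt}, whenever the constraint is inactive, $w''_*=w''_{\mathrm{free}}$, so
\begin{align*}
w''_*-w=\frac{\delta^2}{\tilde\lambda+\delta^2}(w_d-w).
\end{align*}
Substituting this gives
\begin{align*}
\bigl(\mathcal Q_1(f),w\bigr)=\frac{\delta^2}{\tilde\lambda+\delta^2}\bigl(w_dM_S(t)-m_S(t)\bigr),
\end{align*}
which vanishes precisely under the choice $w_d=m_S/M_S$. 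Combining the two terms yields $dm/dt=0$.

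\emph{Main obstacle: the projection in Lemma~\ref{lem:wellposed_kkt}.} The computation above is only valid where the projection is the identity, i.e.\ where $w''_{\mathrm{free}}\in S$.
\begin{itemize}
\item Because $w''_{\mathrm{free}}$ is a convex combination of $w\in S$ and $w_d\in S$, it lies in $S$ whenever $S$ is an interval (for instance $S=[-1,1]$, or a single component $[r_1,r_2]$).
\item It also lies in $S$ whenever $w$ and $w_d$ belong to the same component of the symmetric set \eqref{eq:choiceS}.
\item For $w$ and $w_d$ in opposite components, however, $w''_{\mathrm{free}}$ may land in $(-r_1,r_1)$ and be projected. The projection then adds a correction that need not integrate to zero.
\end{itemize}
I would therefore first check whether the paper's intended setting makes the constraint inactive; in that case the argument above is the whole proof. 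Otherwise I would state that extra hypothesis explicitly. That hypothesis is either that $S$ is convex, or that $w''_{\mathrm{free}}\in S$ for a.e.\ $w\in S$ in the support of $f$ (for example, if the mass in $S$ sits in the component containing $w_d$). Without some such hypothesis the general symmetric case does not give an exact cancellation, and I do not expect it to.
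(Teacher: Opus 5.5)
Your proposal follows the paper's proof: you test with $\varphi=w$, cancel the interaction term by the swap $(w,x)\leftrightarrow(w_*,y)$ using the symmetry of $W$ and $P$, and evaluate the media term with the unconstrained control to get $\frac{\delta^2}{\tilde\lambda+\delta^2}(w_dM_S-m_S)$. Your caveat is well founded and the extra hypothesis you propose is needed. The paper's proof also uses the control ``in the unconstrained regime'' without stating this as a hypothesis, and when $w''_{\mathrm{free}}$ is projected the correction does not cancel: for example, equal masses near $w=-0.1$ and $w=0.5$, with $S=[-0.6,-0.05]\cup[0.05,0.6]$ and $\tilde\lambda=1.5\delta^2$, give $w_d=0.2\in S$ but $dm/dt\approx 0.015/\tau_1$.
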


\begin{proof}

To evaluate the evolution of the mean we can replace $\varphi(w,x) = w$ in \eqref{eq:boltz} and find: 
\begin{align*}
\partial_t m(t)
&= \frac{1}{\tau_0}
\int_0^1 \int_0^1 \int_{-1}^1 \int_{-1}^1 
W(x,y) {\langle w' - w\rangle} f(w,x,t) f(w_*,y,t)\,dw\,dw_*\,dx\,dy \nonumber\\
&\quad + \frac{1}{\tau_1} \int_0^1
\int_{-1}^1 (w''_* - w) f(w,x,t)\,dw\,dx.
\end{align*}
Using the symmetry assumptions
$P(w,w_*) = P(w_*,w)$
and $W(x,y)= W(y,x)$,
a standard change of variables shows that 
\begin{align*}
\int_0^1 \int_0^1\int_{-1}^1 \int_{-1}^1 W(x,y)\langle w' - w \rangle f(w,x) f(w_*,y)\,dw\,dw_*\,dx\,dy  = 0,
\end{align*}
so that binary interactions do not contribute to the evolution of the mean opinion.

We now focus on the control term. Restricting to the admissible set $S$, we obtain
\begin{align*}
\int_0^1 \int_S (w''_* - w) f(w,x,t)\,dw\,dx.
\end{align*}

Using the explicit form of the optimal control in the unconstrained regime, we have
\begin{align*}
w'' - w = \delta z = \frac{\delta^2}{\tilde{\lambda} + \delta^2}(w_d - w),
\end{align*}
which yields
\begin{align*}
\int_0^1 \int_S (w'' - w) f(w,x,t)\,dw\,dx =
\frac{\delta^2}{\tilde{\lambda} + \delta^2}
\int_0^1 \int_S (w_d - w) f(w,x,t)\,dw\,dx.
\end{align*}

Using the definitions given, we immediately recognise the mean and the mass in the 
\begin{align}
\int_0^1 \int_S (w'' - w) f(w,x,t)\,dw\,dx
=
\frac{\delta^2}{\tilde{\lambda} + \delta^2}
\big(w_d M_S(t) - m_S(t)\big).
\label{eq:evolution_media}
\end{align}
We notice that we have a conservation of the total mean if and only if: 
$$w_d = \frac{m_S(t)}{M_S(t)}. $$
 This condition corresponds to a neutral control regime, in which the media does not introduce any net drift in the global mean opinion.
\end{proof}
The previous lemma highlights the selective nature of the control mechanism. Only the fraction of agents with opinions in the set $S$ contributes to the evolution of the mean opinion.
In particular, equation \eqref{eq:evolution_media} suggests that the rate of convergence towards the desired opinion $w_d$ depends on the mass $M_S(t)$ of agents lying in the sphere of controversy. If this fraction is small, the overall effect of the media is weak and the mean opinion evolves slowly. Conversely, when a large portion of the population belongs to $S$, the alignment toward $w_d$ is significantly raised.

Moreover, the contribution is modulated by the distance from the target opinion: agents whose opinions are farther from $w_d$ exert a stronger influence on the shift of the mean. This results, in a scenario in which the binary opinion dynamics is omitted and only the action of the media is present, in a collective dynamics in which the global mean opinion is driven toward $w_d$, with an effective strength determined by the distribution of agents within $S$.

In contrast, when $w_d \neq m_S(t)/M_S(t)$, the media induces a systematic shift of the mean towards the target opinion $w_d$. The strength of this effect is proportional to the mass $M_S(t)$ of agents in $S$.

\section{Fokker-Planck-type equation in the quasi-invariant limit}
\label{sec:FP}
The analytical asymptotic-in-time distribution of the Boltzmann-type equation is very complex to deduce. For this reason, several alternative models have been proposed to investigate the large-time behaviour. As presented in \cite{toscani2006kinetic} and \cite{pareschi2013BOOK} a possible approach is to consider a particular regime: the \textit{quasi-invariant limit}. The idea is to work in a regime where both the interaction and the stochastic contributions are rescaled to be very small, and consider this for large times. This is consistent with the fact that we are interested in the behaviour of our system near equilibrium, where each interaction produces only a small variation of the microscopic state. Under this assumption, one can formally derive a Fokker-Planck-type equation that preserves the main  properties of the microscopic dynamics close to the stationary state and allows for an easier characterisation of the asymptotic distribution $f^{\infty}$.

Starting from \eqref{eq:boltz}--\eqref{eq:q_1} which is satisfied by the probability density $f(w,x,t)$, we consider the following scaling:
$$ t \rightarrow  \frac{t}{\gamma}, \quad \delta \rightarrow k \gamma, \quad \Tilde{\lambda}\rightarrow \nu \gamma$$
and perform a Taylor expansion, that is consistent with the fact that in the chosen regime $w'-w$ is small:
\begin{align*}
\phi(w')- \phi(w) &= \phi'(w)(w'-w)+ \frac{1}{2}\phi''(w)(w'-w)^2 + \Tilde {R}(\gamma, \sigma^2)
\end{align*}
and similarly for $w''.$

For streamlining the presentation we consider $P(w_*,w)\equiv 1$ for the derivation of the limit equations.
Dividing by $\gamma$ and passing to the limit $\gamma, \sigma \to 0$ with $\sigma^2/\gamma = \lambda$ fixed and defining:
\begin{align}
A_W[f](x,w) & \coloneqq \int_0^1\int_{-1}^1 W(x,y)(w_* - w)f(w_*,y)\,dw_*\,dy, \label{eq:A_w}\\ 
H(w) &\coloneqq (w_d-w)\chi_{\{w\in S\}}.\label{eq:H}
\end{align}
we obtain the following Fokker-Planck-type equation (written in strong form): 
\begin{align}
\partial_t f =
\partial_w
\Bigg[\frac{1}{\tau_0}A_W[f](x,w) f -
\frac{\lambda}{2}\partial_w(D^2(w)f)+\frac{k^2}{\nu\tau_1}H(w)f
\Bigg],
\label{eq:fokker_planck}
\end{align}
which is subject to no-flux boundary conditions, see \cite{toscani2006kinetic, pareschi2013BOOK,During:strongleaders} for details of the derivation.

To obtain the stationary behaviour we look at the equilibrium imposing $\partial_tf^{\infty}=0$ coupled to the null flux boundary conditions:
\begin{align}
\begin{cases}
\partial_w
\Bigg[\frac{1}{\tau_0}A_W[f^{\infty}](x,w) f^{\infty} -
\frac{\lambda}{2}\partial_w(D^2(w)f^{\infty})+\frac{k^2}{\nu\tau_1}H(w)f^{\infty}
\Bigg] =0 \text{ in } I \times \Omega,\\ 
\Big(\frac{1}{\tau_0}A_W[f^\infty]+\frac{k^2}{\nu\tau_1}H(w)\Big)f^\infty-\frac{\lambda}{2}\partial_w(D^2(w)f^\infty)=0 \text{ on } \partial I \times \partial\Omega.
\end{cases}
\end{align}
We define:
\begin{align}
m_W(x) & \coloneqq \int_0^1 W(x,y)m^{\infty}(y)\,dy,\qquad
\rho_W(x) \coloneqq \int_0^1 W(x,y)\rho(y)\,dy,
\label{eq:mass_media_weighted}
\end{align}
in which $m^{\infty}(y)$ denotes the asymptotic mean of the opinion of the agents with position equal to $y$ and $\rho(y)$ is the mass of the agents with position equal to $y$.
It is now trivial to observe that evaluating the operator $A_W$ defined in \eqref{eq:A_w} in the steady state:
\begin{align*}
A_W[f^\infty](x,w) = m_W(x) - w \rho_W(x).
\end{align*}
Setting $g(w,x)= f^{\infty}(w,x) D^2(w)$, we obtain the following ODE for the stationary profile:
\begin{align*}
\frac{\partial_w g}{g}
= \frac{2}{\lambda}\frac{\frac{1}{\tau_0}(m_W(x)-w\rho_W(x))+\frac{k^2}{\nu\tau_1}H(w)}{D^2(w)}.
\end{align*}
Integrating both sides, we end up with:
\begin{align*}
g(w,x) = C(x)\exp\Bigg(\frac{2}{\lambda}\int_0^w\frac{\frac{1}{\tau_0}(m_W(x)-v\rho_W(x))+\frac{k^2}{\nu\tau_1}H(v)}{D^2(v)}\,dv\Bigg).
\end{align*}
Hence, choosing $D(w)= (1- w^2)^\alpha$, our stationary solution must satisfy: 

\begin{align}
f^\infty(w,x)= \frac{C(x)}{(1-w^2)^{2 \alpha}}\exp\Bigg(\frac{2}{\lambda}\int_0^w\frac{\frac{1}{\tau_0}(m_W(x)-v\rho_W(x))+\frac{k^2}{\nu\tau_1}H(v)}{(1- v^2)^{2 \alpha}}\,dv\Bigg).
\label{eq:f_infinity}
\end{align}
The diffusion coefficient is chosen as
\[
D(w) = (1-w^2)^\alpha, \qquad \alpha > 0,
\]
which vanishes at the boundary opinions $w = \pm 1$, reflecting the realistic
assumption that extreme opinions are harder to change. The exponent $\alpha$
controls the rate of degeneracy near the boundaries and plays a crucial role
in determining whether the admissible interval $[-1,1]$ can be preserved across all interactions.
As shown in Appendix \ref{app:bounded}, the case $\alpha \ge 1$ ensures the existence of a
fixed support for the noise variable $\eta$, independent of the current
opinion $w$, under which both post-interaction opinions $w'$ and $w_*'$
remain in $[-1,1]$. The canonical choice $\alpha = 1$, originally proposed
by Toscani in \cite{toscani2006kinetic}, yields the sufficient uniform condition
$|\eta| \le (1-\gamma)/2$ and leads to a Fokker--Planck equation with an
explicitly computable stationary state.
For $0 < \alpha < 1$, a fixed support is no longer sufficient, since the
required bound on $|\eta|$ degenerates as $|w| \to 1$. Nevertheless, this
regime remains accessible provided one allows the support of $\eta$ to depend
on the current opinion $w$ chosen so that
\begin{equation*}
|\eta|
\;\le\;
(1-\gamma)\,
\frac{(1-|w|)^{1-\alpha}}{(1+|w|)^{\alpha}}.
\end{equation*}
This opinion-dependent truncation prevents boundary violations at the cost
of a more complex interaction kernel. For a detailed discussion, see Appendix~\ref{app:bounded}.

\section{Characterisation of the stationary distribution}
\label{sec:stationary}
In this section we discuss the characterisation of the stationary solution of the Fokker-Planck-type equation \eqref{eq:fokker_planck},
\begin{align*}
f^\infty(w,x)= \frac{C(x)}{(1-w^2)^{2\alpha}}
\exp\left(\frac{2}{\lambda} \int_0^w
\frac{\dfrac{1}{\tau_0}(m_W(x)-v\rho_W(x))
+\dfrac{k^2}{\nu\tau_1}H(v)}
{(1-v^2)^{2\alpha}}\,dv\right).
\end{align*}
First of all, we remark that a distribution of Beta type on the interval $(-1,1)$ is characterised by a
density of the form
\begin{align}
f^{\infty}(w) \propto (1-w)^{a(x)-1}(1+w)^{b(x)-1},
\label{eq:beta}
\end{align}
that is, a purely algebraic behaviour at the boundaries $w=\pm 1$.
Therefore, in order for the stationary state to belong to this class, the
exponential term must produce only logarithmic contributions, so that it can
be rewritten as powers of $(1\pm w)$.

As already shown in \cite{toscani2006kinetic,pareschi2013BOOK} when $\alpha=1/2$ the denominator simplifies as
$(1-v^2)^{2\alpha}=(1-v^2)$ and splitting the integrand and computing explicitly:
\begin{equation*}
\int \frac{dv}{1-v^2} = \frac{1}{2}\log\!\left(\frac{1+v}{1-v}\right),
\qquad
\int \frac{v\,dv}{1-v^2} = -\frac{1}{2}\log(1-v^2).
\end{equation*}
Hence, in the pure opinion dynamics (i.e. $H(v)=0$), we obtain
\begin{align}
f^\infty(w,x)
= \frac{C(x)}{1-w^2}
\exp\!\Bigg(
\frac{1}{\lambda\tau_0}
\Big[
m_W(x)\log\!\left(\frac{1+w}{1-w}\right)
+\rho_W(x)\log(1-w^2)
\Big]
\Bigg).
\end{align}
Exploiting the properties of
the logarithm, we have
\begin{align}
f^\infty(w,x)
= C(x)\,
(1-w)^{a(x)-1}
(1+w)^{b(x)-1},
\label{eq:f_beta}
\end{align}
where
\begin{equation}
a(x) = \frac{\rho_W(x)-m_W(x)}{\lambda\tau_0},
\qquad
b(x) = \frac{\rho_W(x)+m_W(x)}{\lambda\tau_0}.
\label{eq:ab_beta_op}
\end{equation}
Thus, in the absence of the media term, the stationary distribution is a
Beta-type distribution, since the interaction is modulated by the graphon, the parameters are entirely determined by the local graphon-weighted mass $\rho_W(x)$ and mean opinion $m_W(x)$.

Let us now turn to the media term, $H(v) = (w_d -v)\chi_{\{v \in S\}}$, with $S=[-r_2,-r_1]\cup[r_1,r_2]$. Because $H$ is discontinuous at $\partial S = \{-r_2,-r_1,r_1,r_2\}$, we no longer solve the ODE on the whole of $(-1,1)$ at once: we solve it separately on each of the five subintervals on which $H$ is smooth,
$$
(-1,-r_2),\quad [-r_2,-r_1],\quad (-r_1,r_1),\quad [r_1,r_2],\quad (r_2,1),
$$
and then fix the relative constants by requiring $f^\infty$ to be continuous at the four matching points. This avoids ever having to track an integral of the form $\int_0^w H(v)\chi_S(v)\,dv$, whose lower limit $0$ need not lie in the same connected component as $w$.

On the three subintervals where $H=0$ (the two tails and the central band), the ODE is the pure one solved above, so on each of them $f^\infty$ retains the unperturbed Beta form \eqref{eq:f_beta}--\eqref{eq:ab_beta_op}, with a constant that may differ from one subinterval to the next: we denote them by $C_-(x)$ on $(-1,-r_2)$, $C_0(x)$ on $(-r_1,r_1)$, $C_+(x)$ on $(r_2,1)$.

On $S$,  $H(v) = (w_d -v)\chi_{\{v \in S\}}$, since the opinion dynamics is influenced also by the action of the media. If we still assume $\alpha=1/2$, the stationary distribution becomes
\begin{align}
f^\infty(w,x)
= \frac{C(x)}{1-w^2}
\exp\!\left(\frac{2}{\lambda}
\int_0^w
\frac{\dfrac{1}{\tau_0}(m_W(x)-v\rho_W(x))
+\dfrac{k^2}{\nu\tau_1}H(v)}
{1-v^2}\,dv\right).
\end{align}
Splitting the integral and using the result of \eqref{eq:f_beta},
we can factor out the Beta contribution to obtain
\begin{align*}
f^\infty(w,x)
= C(x)\,(1-w)^{a(x)-1}(1+w)^{b(x)-1}
\exp\!\left(\frac{2k^2}{\lambda\nu\tau_1}
\int_0^w \frac{H(v)}{1-v^2}\,dv\right),
\label{eq:f_H}
\end{align*}
where $a(x)$ and $b(x)$ are given by \eqref{eq:ab_beta_op}. The additional
exponential factor represents a deformation of the Beta distribution induced by the media.

We compute the decomposition
\begin{equation*}
\frac{w_d - v}{1-v^2}
= \frac{w_d-v}{(1+v)(1-v)}
= \frac{A}{1+v}+\frac{B}{1-v},
\end{equation*}
and we get
\begin{equation*}
A = \frac{w_d+1}{2}, \qquad B = \frac{w_d-1}{2}.
\label{eq:AB}
\end{equation*}
Therefore, 
\begin{align*}
\int_{[0,w]\cap S} \frac{H(v)}{1-v^2}\,dv
&= \frac{w_d+1}{2}\log(1+w) - \frac{w_d-1}{2}\log(1-w).
\end{align*}
The exponential factor thus becomes
\begin{align*}
\exp\!\left(\frac{2k^2}{\lambda\nu\tau_1}
\int_0^w\frac{H(v)}{1-v^2}\,dv\right)
= (1+w)^{\frac{k^2(w_d+1)}{\lambda\nu\tau_1}}
(1-w)^{-\frac{k^2(w_d-1)}{\lambda\nu\tau_1}},
\qquad w\in S.
\end{align*}
Combining this with the pure contribution from \eqref{eq:f_beta}, the ODE on each component of $S$ is solved by the stationary distribution on $S$:
\begin{align*}
f^\infty(w,x)
= \Tilde{C}(x)\,(1-w)^{\tilde{a}(x)-1}(1+w)^{\tilde{b}(x)-1},
\qquad w\in S,
\label{eq:f_beta_H}
\end{align*}
where
\begin{equation*}
\tilde{a}(x)
= \frac{\rho_W(x)-m_W(x)}{\lambda\tau_0}
- \frac{k^2(w_d-1)}{\lambda\nu\tau_1},
\qquad
\tilde{b}(x)
= \frac{\rho_W(x)+m_W(x)}{\lambda\tau_0}
+ \frac{k^2(w_d+1)}{\lambda\nu\tau_1},
\label{eq:ab_shifted}
\end{equation*}
a Beta form with shifted
parameters and a priori independent multiplicative constant on each component of $S$: $\tilde C_-(x)$ on $[-r_2,-r_1]$, $\tilde C_+(x)$ on $[r_1,r_2]$..
Since $w_d\in[-1,1]$, we note that $w_d-1\leq 0$ and $w_d+1\geq 0$, so
that both corrections to $a(x)$ and $b(x)$ are non-negative. The media
therefore shifts both Beta parameters upward, with the net asymmetry
controlled by the sign of $w_d$: a positive target opinion $w_d>0$
increases $\tilde{b}$ more than $\tilde{a}$, pushing the distribution toward $w=+1$, as expected on physical grounds, a negative target opinion $w_d <0$ has the opposite effect.

The five local solutions are pinned together, up to the single overall constant $C(x)$, by continuity at $\partial S$. Fixing $C_0(x):=C(x)$ and matching at $w=r_1$ and $w=r_2$ gives
$$
\tilde C_+(x) = C(x)\left({1-r_1}\right)^{a(x)-\tilde a(x)}\!\!(1+r_1)^{b(x)-\tilde b(x)},
\qquad
C_+(x) = \tilde C_+(x)\,(1-r_2)^{\tilde a(x)-a(x)}(1+r_2)^{\tilde b(x)-b(x)},
$$
by symmetry and matching at $w=-r_1$ and $w=-r_2$ gives:
$$
\tilde C_-(x) = C(x)\,(1+r_1)^{a(x)-\tilde a(x)}(1-r_1)^{b(x)-\tilde b(x)},
\qquad
C_-(x) = \tilde C_-(x)\,(1+r_2)^{\tilde a(x)-a(x)}(1-r_2)^{\tilde b(x)-b(x)}.
$$
Finally, $C(x)$ is fixed by the normalisation $\int_{-1}^1 f^\infty(w,x)\,dw=1$, each of the five terms being an elementary Beta integral.\\
Overall, the stationary distribution is a piecewise Beta distribution,
\begin{equation}
f^\infty(w,x) =
\begin{cases}
C_\pm(x)\,(1-w)^{a(x)-1}(1+w)^{b(x)-1}, & w \in (r_2,1)\ \text{or}\ (-1,-r_2),\\[4pt]
C_0(x)\,(1-w)^{a(x)-1}(1+w)^{b(x)-1}, & w \in (-r_1,r_1),\\[4pt]
\tilde C_\pm(x)\,(1-w)^{\tilde{a}(x)-1}(1+w)^{\tilde{b}(x)-1}, & w \in [r_1,r_2]\ \text{or}\ [-r_2,-r_1],
\end{cases}
\end{equation}
with shifted parameters $\tilde{a}(x), \tilde{b}(x)$ on $S$ and unperturbed
parameters $a(x), b(x)$ outside $S$, and with the five constants $C_-(x),C_0(x),C_+(x),\tilde C_-(x),\tilde C_+(x)$ determined, up to the single free constant $C(x)$, by the matching relations above. By construction, the two branches meet continuously
at $\partial S$, though with a discontinuity in the first derivative, reflecting
the sharp boundary of the media's influence region.

\begin{lemma}
Let the stationary density $f^\infty(w,x)$ be defined on the domain $w \in [-1,1]$ as:
\begin{equation}
f^{\infty}(w,x) = \frac{C(x)}{D^2(w)} \exp\Bigg( \frac{2}{\lambda} \int_0^w \frac{F(v,x)}{D^2(v)} \, dv \Bigg)
\label{eq:f_infinity_drift}
\end{equation}
with the effective drift $F(v,x)$  given by:
\begin{equation}
F(v,x) \coloneqq \frac{1}{\tau_0}(m_W(x)-v\rho_W(x)) + \frac{k^2}{\nu\tau_1}(w_d-v)\chi_{\{v \in S\}}
\label{eq:drift}
\end{equation}
where $D^2(w) = (1-w^2)^{2\alpha}$ with $\alpha > 0$ represents the state-dependent diffusion coefficient.
For any fixed $x \in \Omega$, the distribution $f^\infty(\cdot,x)$ is integrable in $L^1(-1,1)$ provided that $w_d \in (-1,1)$ and the system is in a non-trivial state ($m_W < \rho_W$).
\end{lemma}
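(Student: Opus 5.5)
The plan is to reduce integrability to the local behaviour of $f^\infty(\cdot,x)$ at the two endpoints $w=\pm1$ and then to read off the sign of the effective drift $F$ there. Fix $x\in\Omega$ and write $F(v)=F(v,x)$. On $(-1,1)$ the function $v\mapsto F(v)/D^2(v)$ is bounded on every compact subinterval. It is piecewise continuous, with jumps only at $\partial S$. Hence the exponent $\frac{2}{\lambda}\int_0^w F/D^2\,dv$ is continuous on $(-1,1)$, and so is $f^\infty(\cdot,x)$. It therefore suffices to show integrability on $(1-\varepsilon,1)$ and on $(-1,-1+\varepsilon)$ for some small $\varepsilon>0$. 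I will only write out the endpoint $w=1$; the endpoint $w=-1$ follows by the reflection $w\mapsto -w$, with the roles of $a,b$ exchanged as in \eqref{eq:ab_beta_op}.

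\textbf{Step 1: sign of $F$ at the boundary.} Choose $\varepsilon$ so small that $(1-\varepsilon,1)$ lies entirely inside $S$ or entirely outside it. This is possible because $S$ is a finite union of intervals. Then $\chi_{\{v\in S\}}\equiv\chi_0\in\{0,1\}$ there, $F$ is affine on this interval, and it extends continuously to $v=1$ with
\[
F(1)=\frac{1}{\tau_0}\bigl(m_W(x)-\rho_W(x)\bigr)+\chi_0\,\frac{k^2}{\nu\tau_1}(w_d-1)=:-c_+ .
\]
The hypotheses $m_W<\rho_W$ and $w_d<1$ give $c_+>0$. Note that if $\chi_0=1$, then already $w_d<1$ would suffice. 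At $w=-1$ one gets analogously
\[
F(-1)=\frac{1}{\tau_0}(m_W+\rho_W)+\chi_0'\,\frac{k^2}{\nu\tau_1}(w_d+1)=:c_->0 .
\]
Here I interpret ``non-trivial state'' as $|m_W|<\rho_W$. The bound $m_W\ge-\rho_W$ holds automatically since $w\ge-1$, and the strict inequality excludes the degenerate concentrated case; together with $w_d>-1$ it gives $c_->0$. By continuity, $F(v)\le -c_+/2$ on $(1-\varepsilon,1)$ after shrinking $\varepsilon$.

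\textbf{Step 2: case analysis in $\alpha$.} On $(1-\varepsilon,1)$ we have $D^2(v)=(1-v)^{2\alpha}(1+v)^{2\alpha}\asymp(1-v)^{2\alpha}$.
\begin{enumerate*}[label=(\roman*)]
\item If $2\alpha<1$, then $1/D^2$ is integrable near $1$. So the exponent stays bounded, and $f^\infty\lesssim(1-w)^{-2\alpha}$, which is integrable.
\item If $2\alpha>1$, then $\int_0^w F/D^2\,dv\le C-c(1-w)^{1-2\alpha}\to-\infty$. Hence $f^\infty\lesssim(1-w)^{-2\alpha}\exp\bigl(-c'(1-w)^{1-2\alpha}\bigr)$, which tends to $0$ faster than any power and is therefore integrable.
\item If $2\alpha=1$, a sharper estimate is needed. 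Write $F(v)=-c_+ +O(1-v)$ and use the partial fractions already computed in Section~\ref{sec:stationary}. This gives
\[
\frac{2}{\lambda}\int_0^w\frac{F}{1-v^2}\,dv=\frac{c_+}{\lambda}\log(1-w)+O(1),
\]
so that $f^\infty\asymp(1-w)^{c_+/\lambda-1}$. This is precisely the local Beta exponent $a(x)-1$ (or $\tilde a(x)-1$ on $S$) of the piecewise Beta form derived above. It is integrable because $c_+/\lambda>0$.
\end{enumerate*}

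The main obstacle is the critical case $\alpha=1/2$. There the decay is only algebraic, so integrability hinges on the strict positivity of the effective Beta parameter, that is, on $c_\pm>0$. This is exactly where the hypotheses $w_d\in(-1,1)$ and the non-triviality condition are used. I would handle it via the explicit piecewise Beta representation rather than by crude bounds, checking that $a(x),b(x)>0$ when $\pm1\notin S$ and $\tilde a(x),\tilde b(x)>0$ when $\pm1\in\overline S$.
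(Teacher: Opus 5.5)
Your proof is correct, and its skeleton matches the paper's. Both arguments localise at $w=\pm1$, use $m_W<\rho_W$ and $w_d<1$ to get $F\le -c<0$ near $w=1$, and then bound the tail.

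The difference is in the last step. The paper needs no asymptotics. With $u(w)=\int_\beta^w D^{-2}(v)\,dv$, the bound $F\le -c$ gives $f^\infty(w,x)\le C\,D^{-2}(w)\,e^{-\frac{2c}{\lambda}u(w)}$. Since $du=dw/D^2(w)$, the tail integral becomes $\int_0^{u(1)}e^{-\frac{2c}{\lambda}u}\,du\le \frac{\lambda}{2c}$. This holds whether $u(1)$ is finite ($\alpha<1/2$) or infinite ($\alpha\ge1/2$), so one estimate covers every $\alpha>0$.

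In particular, $\alpha=1/2$ is not a real obstacle. Contrary to your remark, you need nothing sharper than $F\le -c_+/2$ there: it already gives $f^\infty\lesssim(1-w)^{c_+/(2\lambda)-1}$, which is integrable.

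Your three-regime analysis is longer but tells you more:
\begin{itemize}
\item the exact boundary behaviour in each regime;
\item that for $\alpha<1/2$ no sign condition on the drift is needed at all;
\item that the exponent at $\alpha=1/2$ is the (shifted) Beta parameter $a(x)$ or $\tilde a(x)$ of Section~\ref{sec:stationary}.
\end{itemize}

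You are also more careful than the paper at $w=-1$. The stated hypothesis $m_W<\rho_W$ only controls the right endpoint. The paper's ``by symmetry'' silently also uses $m_W>-\rho_W$. That condition is genuinely needed when $\alpha\ge1/2$ and the media does not act near $-1$. For example, take $\alpha=1/2$, $-1\notin\overline S$ and $m_W=-\rho_W$ (compatible with $m_W<\rho_W$ whenever $\rho_W>0$). Then $b(x)=0$ and $f^\infty\asymp(1+w)^{-1}$ near $-1$, which is not integrable. So reading the non-triviality condition as $|m_W|<\rho_W$ is the right fix; your observation that $m_W\ge-\rho_W$ holds automatically makes it a mild strengthening.

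Finally, your choice of $\varepsilon$ so that $\chi_{\{v\in S\}}$ is constant near the endpoint separates the cases $\pm1\in\overline S$ and $\pm1\notin\overline S$ cleanly. The paper's bound $F\le-(\chi+K)$ adds the media constant $K$ even when the media term vanishes near the boundary.
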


\begin{proof}
In order to study the integrability of $f^{\infty}$ we need to analyse what happens at the boundaries $w = \pm 1$, where the diffusion coefficient $D^2(w)$ vanishes (see \cite{During:strongleaders}).
We analyse the behaviour near $w=1$; the case $w=-1$ follows by symmetry.
Let $\beta \in (1-\epsilon,1)$ be a value sufficiently close to the right boundary and consider the integral:
\begin{equation*}
\mathcal{I} = \int_\beta^1 f^\infty(w,x) \, dw.
\end{equation*}
We evaluate the sign of the drift $F(v,x)$  defined in Equation \eqref{eq:drift} as $v \to 1$.

Let us consider the non-trivial situation in which the social interaction term satisfies $m_W(x) \leq \rho_W(x)$.
We notice that since the function $r(v)= m_W(x)-v\rho_W(x)$ is continuous and $r(1)= \rho_W(x)-m_W(x)>0 $ in a non trivial case. Then there exists a $\beta <1$ and $\chi >0$ s.t. $v-m \geq \chi \quad \forall v \in [\beta,1]$. The same observation can be done for $h(v)= v-w_d$ if $h(1)= 1- w_d >0$. Hence, choosing $w_d <1$ there exists a $\zeta <1$ and $K > 0$ s.t.\ $v - w_d \geq K \quad \forall v \in [\zeta,1]$.

Let us define $\epsilon \coloneqq \max\{\beta, \zeta\}<1$. 
By continuity, there exists a neighbourhood $[\epsilon, 1]$ and a constant $\chi(x) > 0$ such that $F(v,x) \leq -(\chi + K)$ for all $v \in [\epsilon, 1]$.

Regarding the support of the media $S$:
\begin{itemize}
    \item If $S = [-1,1]$, i.e. the action of the media involves all the agents, the term $(w_d-v)$ is strictly negative near $v=1$ since $w_d < 1$.
    \item If $S \subsetneq [-1,1]$, the media influence does not appear in our integral because $\chi_S(v)$ vanishes next to the boundaries, and the proof follows from \cite{During:strongleaders}.
\end{itemize}
We split the integration interval in $[0, \epsilon] \cup [\epsilon,w]$:
\begin{align*}
    \int_0^w \frac{v\rho_W(x)- m_W(x)}{D^2(v)} dv +  \int_0^w \frac{v-w_d}{D^2(v)} dv =& \int_0^\epsilon \frac{v\rho_W(x)- m_W(x)}{D^2(v)} dv + \int_\epsilon^w \frac{v\rho_W(x)- m_W(x)}{D^2(v)} dv \\& + \int_0^\epsilon \frac{v-w_d}{D^2(v)} dv  + \int_\epsilon^w \frac{v-w_d}{D^2(v)} dv .
\end{align*}
We notice that the first and the third integrals in $[0,\epsilon]$ are positive and finite since the stochastic coefficient is bounded for values different from the extreme values. Hence:
\begin{align*}
\int_0^w \frac{v\rho_W(x)- m_W(x)}{D^2(v)} dv + \int_0^w \frac{v-w_d}{D^2(v)} dv &\geq \int_\epsilon^w \frac{v\rho_W(x)- m_W(x)}{D^2(v)} dv + \int_\epsilon^w \frac{v-w_d}{D^2(v)} dv\\ & \geq \chi \int_\epsilon^w \frac{1}{D^2(v)} dv + K \int_\epsilon^w \frac{1}{D^2(v)} dv\\
& = (\chi + K) \int_\epsilon^w  \frac{1}{D^2(v)} dv.
\end{align*}
Due to the monotonicity of the exponential function, we conclude that:
$$ \exp\Big(- \int_0^w \frac{v\rho_W(x)- m_W(x)}{D^2(v)} dv - \int_0^w \frac{v-w_d}{D^2(v)} dv\Big)  \leq  \exp\Big(- (\chi+K)  \int_\beta^w \frac{dv}{D^2(v)} \Big) .$$
If we choose for simplicity all parameters equal to one, the integral $\mathcal{I}$ is thus bounded by:
\begin{equation*}
\mathcal{I} \leq C \int_\beta^1 \frac{1}{D^2(w)} \exp\Bigg( -c \int_\beta^w \frac{dv}{D^2(v)} \Bigg) \, dw,
\end{equation*}
with $c = 2(\chi+K) > 0$.\\
We introduce the change of variables:
\begin{equation*}
u(w) \coloneqq \int_\beta^w \frac{dv}{D^2(v)}, \quad du = \frac{dw}{D^2(w)}.
\end{equation*}
The behaviour of the integral depends on the limit $u(1) = \int_\beta^1 (1-v^2)^{-2\alpha} \, dv$:
\begin{itemize}
    \item if $\alpha < 1/2$: $u(1) = U < \infty$. The integral $\int_0^U e^{-cu} du < \infty$.
    \item if $\alpha \geq 1/2$: $u(1) = \infty$. However, the exponential decay dominates:
\begin{align*}
\int_{\beta}^{1}\frac{1}{D^2(w)} \exp\Big(- (\chi + K)\int_\beta^w \frac{dv}{D^2(v)} \Big) dw =& \int_0^\infty \exp(-(\chi +K) u) du\\& = -\frac{1}{\chi +K}\int_0^{\infty} - (\chi +K) \exp(-(\chi +K) u)du \\ & = \frac{1}{\chi +K}
\end{align*}
\end{itemize}
In both regimes of $\alpha$, we conclude that $f^\infty(\cdot,x) \in L^1(-1,1)$, ensuring the existence of a normalised stationary state.
\end{proof}

\section{Exponential convergence in Fourier metric on a graphon}
\label{sec:equilibration}
In this section we study the exponential convergence of solutions of the Boltzmann-type equation \eqref{eq:boltz} in suitable Fourier-based metrics \cite{gabetta1995metrics,carrillo2007contractive}.
In these papers and related works, the distance,
\begin{equation}
d_s(f_1,f_2)
=
\sup_{\xi\neq 0}
\frac{|\hat f_1(\xi)-\hat f_2(\xi)|}{|\xi|^s},
\qquad s>0,
\label{eq:distance}
\end{equation}
where

$$\hat f(\xi,t)=\int_{-1}^1 e^{-i\xi w} f(w,t)\,dw.$$
is considered which is finite provided that the statistical moments of the two probability measures coincide up to the order $[s]$, the integer part of $s$, which means $[s]=s$ if $s \in \mathbb{R} \setminus \mathbb{N}$ and $[s]=s-1$ otherwise. The metric $d_s$ is equivalent (with explicit constants) to the $1$-Wasserstein distance from optimal transport theory.
 A detailed review of Fourier metrics can be found in \cite{carrillo2007contractive}.

We notice that, compared with the classical convergence results, our unknown $f$ depends on an additional variable $x$, hence we need a distance
that resolves the opinion distribution \textit{at each graph
position $x$ separately}. We therefore introduce, for each fixed
$x\in[0,1]$, the Fourier transform
\begin{equation}
\hat f(\xi,x,t) := \int_{-1}^{1} e^{-i\xi w} f(w,x,t)\,dw,
\label{eq:fourier_transform_def}
\end{equation}
and the associated metric
\begin{equation}
D_s(f_1,f_2)(t) :=
\sup_{x\in[0,1]}\;
\sup_{\xi\neq 0}
\frac{|\hat f_1(\xi,x,t)-\hat f_2(\xi,x,t)|}{|\xi|^s}.
\label{eq:Ds_def}
\end{equation}
Since $|e^{-i\xi w}|=1$ for all $w,\xi$, we always have the pointwise
bound
\begin{equation*}
|\hat f(\xi,x,t)| \le \int_{-1}^1 f(w,x,t)\,dw = \rho(x,t) \le 1,
\label{eq:fhat_bound}
\end{equation*}
where $\rho(x,t)$ is the local mass at $x$, and the last inequality
holds because $f(\cdot,x,t)$ is a probability
density in $w$.

By construction, $D_s$ dominates the classical marginal distance.
Denote by
\begin{equation*}
d_s(f_1,f_2)(t) := \sup_{\xi\neq0} \frac{|\hat f_1(\xi,t)-\hat f_2(\xi,t)|}{|\xi|^s},
\qquad
\hat f(\xi,t) := \int_{-1}^1 e^{-i\xi w} f(w,t)\, dw,
\label{eq:ds_def}
\end{equation*}
the analogue of \eqref{eq:Ds_def} for the $x$-marginal opinion
densities $f_1(w,t)=\int_0^1 f_1(w,x,t)\,dx$ and
$f_2(w,t)=\int_0^1 f_2(w,x,t)\,dx$. Then
\begin{equation}
d_s(f_1,f_2)(t) \;\le\; D_s(f_1,f_2)(t), \qquad \forall t\ge0.
\label{eq:ds_leq_Ds}
\end{equation}

We work with the graphon Boltzmann-type gain operator $\mathcal
Q_0^+$ associated with the opinion dynamics, defined in weak form, for a test function $\varphi(w,x)$, by
\begin{equation}
\big(\mathcal Q_0^+[f](\cdot,\cdot,t),\varphi\big)
= \int_0^1\!\!\int_0^1\!\!\int_{-1}^1\!\!\int_{-1}^1
W(x,y)\,\big\langle \varphi(w',x)\big\rangle\,
f(w,x,t)\,f(w_*,y,t)\, dw\,dw_*\,dy\,dx,
\label{eq:Q0_def}
\end{equation}
where $\langle\cdot\rangle$ denotes the average over the stochastic
part of the interaction rule, $w'$ is the post-interaction opinion, and
$\mathcal Q_1^+$ denotes the analogous (graphon-independent) gain
operator describing the interaction with the media.

The full local equation for $\hat f(\xi,x,t)$ reads
\begin{equation}
\partial_t \hat f(\xi,x,t)
= \frac{1}{\tau_0}\Big[\widehat{\mathcal Q_0^+}[f](\xi,x,t) - \rho_W(x,t)\,\hat f(\xi,x,t)\Big]
+ \frac{1}{\tau_1}\Big[\widehat{\mathcal Q_1^+}[f](\xi,x,t) - \hat f(\xi,x,t)\Big],
\label{eq:full_local_eq}
\end{equation}

where $\rho_W(x,t)$ is the graphon-weighted local mass defined in \eqref{eq:mass_media_weighted}, and the media loss rate is
uniform in $x$ (equal to $1$), since the media background carries no
spatial structure.

\begin{assumption}
\label{ass:setting}
We consider two solutions $f_1(t),f_2(t)$ of \eqref{eq:full_local_eq},
evolving on the same graphon $W(x,y)$, such that:
\begin{enumerate}
\item[(i)] the local mass is pointwise conserved and equal for the
two solutions,
\begin{equation*}
\rho_1(x,t)=\rho_2(x,t) =: \rho(x,t) = \rho(x,0), \qquad \forall x,t,
\label{eq:mass_conservation}
\end{equation*}
which holds because agents do not change their position in the graph; only
their opinion $w$ evolves in time, so the mass at each fixed $x$ is
conserved (and consequently $\rho_{W,1}(x,t)=\rho_{W,2}(x,t)=:\rho_W(x,t)$
for all $x,t$);
\item[(ii)] the total mean opinion of the system is conserved so both solutions meet the requirements of the Lemma \ref{lem:mean_control};
\item[(iii)] the stochastic component does not feature in the opinion model, i.e. $\eta=\eta_* = 0$.
\end{enumerate}
\end{assumption}

\begin{theorem}
\label{thm:fourier}
Let $f_1(t),f_2(t)$ be two solutions of \eqref{eq:full_local_eq}
satisfying Assumption~\ref{ass:setting}. Let $s\in(1,2]$, with
$\gamma\in(0,1/2)$, and assume $D_s(f_1(0),f_2(0))<\infty$. Then
\begin{equation}
D_s(f_1(t),f_2(t)) \le D_s(f_1(0),f_2(0))\,
\exp\!\left(\int_0^t \kappa_W(\tau)\,d\tau\right), \qquad \forall t\ge0,
\label{eq:main_estimate}
\end{equation}
where
\begin{equation}
\kappa_W(t) :=
\frac{\underline\rho_W(t)}{\tau_0}\Big[\big(\gamma^s+(1-\gamma)^s\big)-1\Big]
+ \frac{1}{\tau_1}\left[\left(\frac{\lambda}{\lambda+\delta^2}\right)^{\!s}-1\right],
\qquad
\underline\rho_W(t):=\inf_{x\in[0,1]}\rho_W(x,t).
\label{eq:kappa_final}
\end{equation}
Moreover, $\kappa_W(t)<0$ for every $t\ge0$, independently of the
graphon $W$. Therefore, $D_s(f_1(t),f_2(t))\to0$ as $t\to+\infty$, and
every solution converges to the stationary state $f^\infty$
uniformly in $x$.
\end{theorem}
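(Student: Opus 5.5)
We follow the classical contraction argument of \cite{gabetta1995metrics,carrillo2007contractive}, carried out pointwise in $x$ and then taking the supremum. Set $h(\xi,x,t):=\hat f_1(\xi,x,t)-\hat f_2(\xi,x,t)$. Subtracting the two copies of \eqref{eq:full_local_eq}, and using Assumption~\ref{ass:setting}(i) so that the loss terms carry the common coefficient $\rho_W(x,t)$, we get
\begin{equation*}
\partial_t h+\Big(\frac{\rho_W(x,t)}{\tau_0}+\frac{1}{\tau_1}\Big)h=\frac{1}{\tau_0}\Big[\widehat{\mathcal Q_0^+}[f_1]-\widehat{\mathcal Q_0^+}[f_2]\Big]+\frac{1}{\tau_1}\Big[\widehat{\mathcal Q_1^+}[f_1]-\widehat{\mathcal Q_1^+}[f_2]\Big].
\end{equation*}
The task is then to bound the two gain differences, divided by $|\xi|^s$, by multiples of $D_s(f_1,f_2)(t)$.

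\textbf{Binary interaction gain.} With $P\equiv1$ and $\eta=0$ (Assumption~\ref{ass:setting}(iii)), $w'=(1-\gamma)w+\gamma w_*$. Testing \eqref{eq:Q0_def} with $e^{-i\xi w}$ gives
\begin{equation*}
\widehat{\mathcal Q_0^+}[f](\xi,x)=\hat f((1-\gamma)\xi,x)\int_0^1W(x,y)\hat f(\gamma\xi,y)\,dy.
\end{equation*}
We write $\hat f_1\hat f_1-\hat f_2\hat f_2=(\hat f_1-\hat f_2)\hat f_1+\hat f_2(\hat f_1-\hat f_2)$ and use $|\hat f_j|\le\rho\le1$. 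Dividing by $|\xi|^s$ and using $|(1-\gamma)\xi|^s/|\xi|^s=(1-\gamma)^s$, together with the analogous identity for $\gamma$, yields
\begin{equation*}
\frac{|\widehat{\mathcal Q_0^+}[f_1]-\widehat{\mathcal Q_0^+}[f_2]|}{|\xi|^s}\le\rho_W(x,t)\big((1-\gamma)^s+\gamma^s\big)D_s(f_1,f_2)(t).
\end{equation*}
Here the factor $\rho_W$ appears by pairing the $W$-integral with the unit bounds on $\hat f$. A caveat: the weights should be $\int W\rho$, which requires normalising as $\rho\le1$ pointwise. Finiteness of $D_s$ for $s\in(1,2]$ uses equality of mass and mean, via (i)--(ii) and Lemma~\ref{lem:mean_control}.

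\textbf{Media gain.} We linearise the media map as $w''=\frac{\lambda w+\delta^2 w_d}{\lambda+\delta^2}$. This is the free KKT solution of Lemma~\ref{lem:wellposed_kkt}, with $\tilde\lambda$ written as $\lambda$ as in \eqref{eq:kappa_final}. Then
\begin{equation*}
\widehat{\mathcal Q_1^+}[f](\xi,x)=e^{-i\xi c}\,\hat f\Big(\tfrac{\lambda}{\lambda+\delta^2}\xi,x\Big),\qquad c=\frac{\delta^2w_d}{\lambda+\delta^2}.
\end{equation*}
The phase has modulus one and $w_d$ is common to both solutions, so the contribution is bounded by $(\lambda/(\lambda+\delta^2))^sD_s$. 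This is the step I expect to be the main obstacle, because the true map is the projection $\mathrm{Proj}[S]$ applied only for $w\in S$. The projection is $1$-Lipschitz, but a Lipschitz, non-affine map does not act multiplicatively on Fourier transforms. My first attempt would be to restrict to the regime in which the free solution is admissible, i.e.\ $w''_{\mathrm{free}}\in S$, exactly as in the proof of Lemma~\ref{lem:mean_control}. I would state explicitly that the estimate is derived under this affine description of the media action.

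\textbf{Gronwall and sign.} Combining the two bounds, multiplying by the integrating factor, dividing by $|\xi|^s$, and taking the supremum first over $\xi$ and then over $x$ (with $\rho_W(x,t)\ge\underline\rho_W(t)$ in the negative net coefficient) gives
\begin{equation*}
\frac{d}{dt}D_s\le\kappa_W(t)D_s,
\end{equation*}
at least in integrated form. This differential inequality yields \eqref{eq:main_estimate}. The sign follows because $s>1$ and $\gamma\in(0,1/2)$ give $\gamma^s+(1-\gamma)^s<\gamma+(1-\gamma)=1$, and $\lambda/(\lambda+\delta^2)<1$. Hence $\kappa_W(t)\le\frac{1}{\tau_1}[(\lambda/(\lambda+\delta^2))^s-1]<0$ uniformly, independently of $W$ and $\underline\rho_W\ge0$. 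Therefore $D_s\to0$ exponentially. Taking $f_2=f^\infty$, which is a stationary solution, gives convergence uniformly in $x$, and \eqref{eq:ds_leq_Ds} transfers this to the marginals.
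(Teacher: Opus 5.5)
\textbf{Verdict: there is a genuine gap.} Your argument is essentially the paper's proof. It uses the $x$-resolved product formula for $\widehat{\mathcal Q_0^+}$, the add-and-subtract bound with factor $\rho_W(x,t)\big(\gamma^s+(1-\gamma)^s\big)$, the factor $\big(\lambda/(\lambda+\delta^2)\big)^s$ from an affine media map, Gr\"onwall with $\underline\rho_W$ entering through the net negative coefficient, and the same sign argument. The gap is the media step you flagged, and your proposed fix does not close it.

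Restricting to the regime $w''_{\mathrm{free}}\in S$ removes the projection but not the indicator $\chi_{\{w\in S\}}$: agents with $w\notin S$ are untouched by the media. So your formula $\widehat{\mathcal Q_1^+}[f]=e^{-i\xi c}\hat f\big(\tfrac{\lambda}{\lambda+\delta^2}\xi,x\big)$ is valid for general data only when $S=[-1,1]$. In that case $r_1=0$, and $w''_{\mathrm{free}}$, a convex combination of $w,w_d\in[-1,1]$, is always admissible. For $S\subsetneq[-1,1]$, even in the free regime, the media contribution to the equation for $h$ is
\[
\frac{1}{\tau_1}\Big[e^{-i\xi c}\,h_S\Big(\frac{\lambda}{\lambda+\delta^2}\,\xi,x\Big)-h_S(\xi,x)\Big],\qquad h_S:=\widehat{f_1\chi_S}-\widehat{f_2\chi_S}.
\]
This damps only the $S$-part of $h$. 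Moreover, $|h_S(\xi,x)|/|\xi|^s$ is not controlled by $D_s$: $f_1\chi_S$ and $f_2\chi_S$ need not have equal mass and mean, so the quotient blows up as $\xi\to0$. The paper's proof gets past this step by writing the media gain as $\widehat M\,\widehat{f_S}$, dropping the unchanged $S^c$ part, and invoking $d_s(f_{1,S},f_{2,S})\le d_s(f_1,f_2)$. That inequality fails for exactly this reason (its left side is generally infinite), so you cannot borrow it.

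In fact no argument can close the gap for selective $S$, because the claimed estimate fails there. Take $W(x,y)=\chi_{(1/2,1]}(x)\,\chi_{(1/2,1]}(y)$, so that $\underline\rho_W=0$, a case the statement explicitly covers. Give both solutions the same data for $x>1/2$, e.g.\ supported in the component of $S$ containing $w_d$ with mean $w_d$, which keeps Assumption~\ref{ass:setting}(ii). For $x\le1/2$, give them two different densities with unit mass and equal mean, supported in $[-1,1]\setminus S$. For $x\le 1/2$ there is no binary interaction and the media acts as the identity, so these slices are frozen. Hence $D_s(f_1(t),f_2(t))=D_s(f_1(0),f_2(0))\in(0,\infty)$ for all $t$, whereas the theorem asserts decay at rate at least $\frac{1}{\tau_1}\big[(\lambda/(\lambda+\delta^2))^s-1\big]<0$. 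So what your argument proves is the theorem for $S=[-1,1]$; the same holds for the paper's argument once the $f_S$ step is discarded. You should state your result that way.

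Two smaller corrections:
\begin{itemize}
\item Finiteness of $D_s(t)$ needs equal local means at every $x$. In the affine case this follows from the linear evolution of the local means, not from Lemma~\ref{lem:mean_control}, which concerns only the total mean.
\item The final claim needs a stationary solution of \eqref{eq:full_local_eq} itself, with $\eta=0$ and matching local masses and means. In the affine case this is $\delta_{w_d}$ in every slice. It is not the Beta-type $f^\infty$ of Section~\ref{sec:stationary}, which is a Fokker--Planck equilibrium requiring $\sigma^2/\gamma>0$.
\end{itemize}
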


\begin{proof}
 Choose test functions in
\eqref{eq:Q0_def} of the form $\varphi(w,x) =
\psi(w)\,\chi_\varepsilon(x-x_0)$, where $\chi_\varepsilon$ is an
approximate identity concentrated around $x=x_0$, so that
$\chi_\varepsilon(x-x_0)\to\delta(x-x_0)$ as $\varepsilon\to0$.
Substituting into \eqref{eq:Q0_def} and letting $\varepsilon\to0$
collapses the outer integral in $x$ onto $x=x_0$, giving
\begin{equation}
\big(\mathcal Q_0^+[f](w,x_0,t),\psi\big)
= \int_0^1\!\!\int_{-1}^1 \int_{-1}^1
W(x_0,y)\,\big\langle \psi(w')\big\rangle\,
f(w,x_0,t)\,f(w_*,y,t)\, dw\,dw_*\,dy,
\label{eq:local_weak}
\end{equation}
in which $x_0$ is now a fixed parameter, no longer an integration
variable. Since the graph position $x$ never changes in time (only
the opinion $w$ evolves), the equation \eqref{eq:local_weak} represents the
evolution equation for the sub-population of agents located at
$x_0$, and it is this equation, resolved in $x$, that must be
analysed in order to keep track of how the graphon's local structure
affects the convergence rate.\\
Take
$\psi(w)=e^{-i\xi w}$ in \eqref{eq:local_weak} and define
$\widehat{\mathcal Q_0^+}[f](\xi,x_0,t) := (\mathcal
Q_0^+[f](w,x_0,t),\,e^{-i\xi(w)})$. Since $\eta=0$ then
gives $e^{-i\xi w'} =
e^{-i\xi(1-\gamma)w}e^{-i\xi\gamma w_*}$, so that
\begin{align}
\widehat{\mathcal Q_0^+}[f](\xi,x_0,t)
&= \int_0^1 W(x_0,y)
\left[\int_{-1}^1 e^{-i\xi(1-\gamma)w} f(w,x_0,t)\, dw\right]
\left[\int_{-1}^1 e^{-i\xi\gamma w_*} f(w_*,y,t)\, dw_*\right] dy
\notag\\
&= \hat f\big((1-\gamma)\xi,x_0,t\big)
\int_0^1 W(x,y)\, \hat f(\gamma\xi,y,t)\, dy.
\label{eq:Q0hat_graphon}
\end{align}\\
We notice that for $W\equiv1$, when interactions are not influenced by graph position, this reduces to
$\rho_W(x,t)=\int_0^1\rho(y,t)\,dy=1$ and to the classical relation
$\widehat{\mathcal Q_0^+}[f](\xi,t)=\hat f((1-\gamma)\xi,t)\,\hat
f(\gamma\xi,t)$ of the all-to-all case.

 Let
$h(\xi,x,t):=\hat f_1(\xi,x,t)-\hat f_2(\xi,x,t)$. Since $f_1,f_2$
share the same graphon-weighted local mass $\rho_W(x,t)$ by
Assumption~\ref{ass:setting}(i), subtracting \eqref{eq:Q0hat_graphon}
written for $f_1$ and for $f_2$, adding and subtracting the mixed
product $\hat f_1((1-\gamma)\xi,x_0,t)\int_0^1 W(x_0,y)\hat
f_2(\gamma\xi,y,t)\,dy$, and grouping terms gives
\begin{align*}
\widehat{\mathcal Q_0^+}[f_1]-\widehat{\mathcal Q_0^+}[f_2]
={}& \hat f_1\big((1-\gamma)\xi,x_0,t\big)
\int_0^1 W(x_0,y)\, h(\gamma\xi,y,t)\, dy \notag\\
&+ \left[\int_0^1 W(x_0,y)\, \hat f_2(\gamma\xi,y,t)\, dy\right]
h\big((1-\gamma)\xi,x_0,t\big).
\end{align*}
By the triangle inequality and the pointwise bound
\eqref{eq:fhat_bound},
\begin{align}
\big|\widehat{\mathcal Q_0^+}[f_1]-\widehat{\mathcal Q_0^+}[f_2]\big|(\xi,x_0,t)
&\le \rho_W(x_0,t)\, \sup_y\big|h(\gamma\xi,y,t)\big|
+ \rho_W(x_0,t)\, \sup_y\big|h\big((1-\gamma)\xi,y,t\big)\big|.
\label{eq:triangle}
\end{align}
Dividing \eqref{eq:triangle} by $|\xi|^s$ and using the scaling
identity 
\begin{equation}
    \frac{|h(a\xi,y,t)|}{|\xi|^s} = |a|^s\,\frac{|h(a\xi,y,t)|}{|a\xi|^s} \le
|a|^s\, D_s(t)
\label{eq:rescaling_fourier}
\end{equation}
with $D_s(t):=D_s(f_1,f_2)(t)$, applied with
$a=\gamma$ and $a=1-\gamma$, yields
\begin{equation}
\frac{\big|\widehat{\mathcal Q_0^+}[f_1]-\widehat{\mathcal Q_0^+}[f_2]\big|(\xi,x_0,t)}{|\xi|^s}
\le \rho_W(x_0,t)\big(\gamma^s+(1-\gamma)^s\big)\, D_s(t).
\label{eq:Q0_final_bound}
\end{equation}
The media operator $\mathcal Q_1^+$ has no graphon dependence, so
its Fourier estimate is unchanged from the classical (all-to-all)
setting. Proceeding as in \cite{pareschi2013BOOK} for the case of opinion interaction with the effect of an external background, we can consider our media a background that we will call $M$ dependent on the target opinion $w_d$:
\begin{equation} \big(\mathcal Q_1^+(f,M),\psi\big) = \int_{[-1,1]^2} f(w) M(w_d) \varphi(w'') dw dw_d.
\label{eq:scalar_prod_backg}
\end{equation}
Choosing $\varphi(w)=e^{-i\xi w}$ in \eqref{eq:scalar_prod_backg}, denoting by $\widehat{f_S}$ the Fourier transform restricted to $\xi\! \!\in\! \! S$ and considering $M(w_d)=\delta(w_d - w_d^*)$, we get
\begin{equation*}
\widehat{\mathcal{Q}_1}(f)(\xi)=\widehat M\left(\frac{\delta^2}{\lambda+\delta^2}\xi\right)
\widehat{f_S}\left(\frac{\lambda}{\lambda+\delta^2}\xi\right).
\end{equation*}
Evaluating the operator for the two solutions, we have
\begin{equation*}
\left|
\widehat{\mathcal{Q}_1^+}(f_1)(\xi) - \widehat{\mathcal{Q}_1^+}(f_2)(\xi)
\right| = \left|
\widehat{f_{1,S}}\left(\frac{\lambda}{\lambda+\delta^2}\xi\right)
-\widehat{f_{2,S}}\left(\frac{\lambda}{\lambda+\delta^2}\xi\right)
\right|.
\end{equation*}
since, it is easy to prove that: 
\begin{equation*}
    \left|\widehat M\left(\frac{\delta^2}{\lambda+\delta^2}\xi\right)\right| = 1.
\end{equation*}
Setting $\xi' = \frac{\lambda}{\lambda+\delta^2}\xi$ and exploiting the property in \eqref{eq:rescaling_fourier}, we obtain
\begin{equation*}
\frac{
\left|
\widehat{\mathcal{Q}_1^+}(f_1)(\xi) - \widehat{\mathcal{Q}_1^+}(f_2)(\xi)
\right|
}{|\xi|^s}
=\left(\frac{\lambda}{\lambda+\delta^2}\right)^s
\frac{
\left|
\widehat{f_{1,S}}(\xi') - \widehat{f_{2,S}}(\xi')
\right|
}{|\xi'|^s}.
\end{equation*}
Taking the supremum, we deduce
\begin{equation*}
d_s({\mathcal{Q}}_1^+(f_1),{\mathcal{Q}}_1^+(f_2))
\le
\left(\frac{\lambda}{\lambda+\delta^2}\right)^s
d_s(f_{1,S},f_{2,S}).
\end{equation*}
Since 
\begin{equation*}
d_s(f_{1,S},f_{2,S}) \le d_s(f_{1},f_{2}),
\end{equation*}
we end up with:
\begin{equation}
\frac{\big|\widehat{\mathcal Q_1^+}[f_1]-\widehat{\mathcal Q_1^+}[f_2]\big|(\xi,x,t)}{|\xi|^s}
\le \left(\frac{\lambda}{\lambda+\delta^2}\right)^{\!s} D_s(t).
\label{eq:Q1_final_bound}
\end{equation}
 Subtracting
\eqref{eq:full_local_eq} written for $f_1$ and $f_2$ and using
$\rho_{W,1}(x,t)=\rho_{W,2}(x,t)=\rho_W(x,t)$ gives the equation
satisfied by $h$,
\begin{equation}
\partial_t h(\xi,x_0,t)
= \frac{1}{\tau_0}\Big[\big(\widehat{\mathcal Q_0^+}[f_1]-\widehat{\mathcal Q_0^+}[f_2]\big) - \rho_W(x_0,t)\, h(\xi,x_0,t)\Big]
+ \frac{1}{\tau_1}\Big[\big(\widehat{\mathcal Q_1^+}[f_1]-\widehat{\mathcal Q_1^+}[f_2]\big) - h(\xi,x_0,t)\Big].
\label{eq:h_local_eq}
\end{equation}
Dividing \eqref{eq:h_local_eq} by $|\xi|^s$ and applying
\eqref{eq:Q0_final_bound}--\eqref{eq:Q1_final_bound} gives, for every $x_0\in[0,1]$ and $\xi\neq0$,
\begin{equation}
\partial_t \frac{|h(\xi,x_0,t)|}{|\xi|^s}
\le
\left[\frac{\rho_W(x_0,t)}{\tau_0}\big(\gamma^s+(1-\gamma)^s\big) + \frac{1}{\tau_1}\left(\frac{\lambda}{\lambda+\delta^2}\right)^{\!s}\right] D_s(t)
\;-\; \left[\frac{\rho_W(x_0,t)}{\tau_0} + \frac{1}{\tau_1}\right] \frac{|h(\xi,x_0,t)|}{|\xi|^s}.
\label{eq:pointwise_ineq}
\end{equation}
Equation \eqref{eq:pointwise_ineq} holds for every $x_0,\xi$; we differentiate
$D_s(t)=\sup_{x,\xi}|h(\xi,x,t)|/|\xi|^s$. By the classical argument
for differentiating a supremum of a family of functions satisfying a
uniform differential inequality, evaluated at the point
$(\xi^\star,x^\star)$ where the supremum is attained at
time $t$, one has $|h(\xi^\star,x^\star,t)|/|\xi^\star|^s = D_s(t)$,
so that \eqref{eq:pointwise_ineq} yields:
\begin{equation}
\partial_t D_s(t) \le \sup_{x_0\in[0,1]}
\left\{\frac{\rho_W(x_0,t)}{\tau_0}\Big[\big(\gamma^s+(1-\gamma)^s\big)-1\Big]\right\}
D_s(t)
+ \frac{1}{\tau_1}\left[\left(\frac{\lambda}{\lambda+\delta^2}\right)^{\!s}-1\right] D_s(t).
\label{eq:kappa_sup}
\end{equation}
Since $\gamma\in(0,1/2)$ and
$s\in(1,2]$, one has $\gamma^s+(1-\gamma)^s < 1$, so the bracket
$\big[(\gamma^s+(1-\gamma)^s)-1\big]$ is a negative constant,
independent of $x$. Multiplying a negative constant by
$\rho_W(x,t)\in[\underline\rho_W(t),\overline\rho_W(t)]$, the product
is largest (i.e., least negative) precisely when $\rho_W(x,t)$ is
smallest, so that
\begin{equation}
\sup_{x_0\in[0,1]}\Big\{\rho_W(x_0,t)\big[(\gamma^s+(1-\gamma)^s)-1\big]\Big\}
= \underline\rho_W(t)\,\big[(\gamma^s+(1-\gamma)^s)-1\big].
\label{eq:sign_argument}
\end{equation}
Substituting \eqref{eq:sign_argument} into \eqref{eq:kappa_sup} gives
$\partial_t D_s(t) \le \kappa_W(t)\, D_s(t)$ with $\kappa_W(t)$ as in
\eqref{eq:kappa_final}, and Gr\"onwall's lemma yields
\eqref{eq:main_estimate}.

Both terms in
\eqref{eq:kappa_final} are non-positive: the first because
$\underline\rho_W(t)\ge0$ and $(\gamma^s+(1-\gamma)^s)-1<0$; the
second because $\lambda/(\lambda+\delta^2)<1$ whenever $\delta>0$,
hence $\left(\lambda/(\lambda+\delta^2)\right)^s-1<0$ for every
$s>0$. The second term alone is strictly negative and does not
depend on $W$, so $\kappa_W(t)<0$ always, even in the degenerate
case $\underline\rho_W(t)=0$ (vanishing minimal degree somewhere in
the network). Consequently $D_s(f_1(t),f_2(t))\to0$ as
$t\to+\infty$, and since $D_s$ dominates $d_s$ by
\eqref{eq:ds_leq_Ds}, this implies
$\sup_{x\in[0,1]} d_s\big(f_1(\cdot,x,t),f_2(\cdot,x,t)\big)\to0$,
i.e., convergence uniform in $x$.
\end{proof}

\begin{corollary}
\label{cor:stationary}
Let $f(t)$ solve \eqref{eq:full_local_eq} on the graphon $W$ satisfy the condition of Theorem \ref{thm:fourier}, and let $f^\infty(w,x)$ be the stationary
distribution. Then
\begin{equation*}
D_s\big(f(t),f^\infty\big) \le D_s\big(f(0),f^\infty\big)\,
\exp\!\left(\int_0^t \kappa_W(\tau)\, d\tau\right), \qquad \forall t\ge0,
\end{equation*}
and
\begin{equation*}
\lim_{t\to+\infty}\; \sup_{x_0\in[0,1]}\, d_s\big(f(\cdot,x_0,t),f^\infty(\cdot,x_0)\big) = 0,
\end{equation*}
i.e., every solution converges to the stationary state $f^\infty$
uniformly in $x$.
\end{corollary}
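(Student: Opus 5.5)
The corollary follows from Theorem~\ref{thm:fourier} by taking $f_1=f(t)$ and $f_2=f^\infty$, the stationary distribution regarded as a time-independent solution of \eqref{eq:full_local_eq}. The first task is to check that the pair $(f,f^\infty)$ satisfies Assumption~\ref{ass:setting}.

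\emph{Stationarity.} Since $\partial_t f^\infty=0$, the Fourier transform $\hat f^\infty(\xi,x)$ makes the right-hand side of \eqref{eq:full_local_eq} vanish for every $\xi$ and $x$. Hence $f^\infty$ is a trivial solution of the local Fourier equation.

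\emph{Assumption~\ref{ass:setting}(i).} The local mass must match: $\rho^\infty(x)=\rho(x,0)$ for all $x$. Mass at each graph position is conserved by the dynamics, so the stationary state relevant to $f$ is the one whose normalisation constant $C(x)$ is chosen to reproduce $\rho(x,0)$, rather than fixed to $1$. With this choice $\rho_{W}$ is common to both functions, which is what allows the loss terms to cancel in \eqref{eq:h_local_eq}.

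\emph{Assumption~\ref{ass:setting}(ii).} The total mean of $f^\infty$ must equal the conserved mean of $f$. Under the neutral control condition of Lemma~\ref{lem:mean_control}, $m(t)=m(0)$, so we take the stationary state with the same mean.

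\emph{Assumption~\ref{ass:setting}(iii).} This is inherited from the hypothesis of the corollary.

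\emph{Finiteness of the initial distance.} Theorem~\ref{thm:fourier} requires $D_s(f(0),f^\infty)<\infty$. For $s\in(1,2]$ this needs the zeroth and first moments to agree at each $x$, so that $|\hat f(\xi,x,0)-\hat f^\infty(\xi,x)|=O(|\xi|^2)$ near $\xi=0$. For large $|\xi|$ finiteness is automatic, because both transforms are bounded by $\rho(x)\le1$.

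The main obstacle is this moment matching. The assumption only guarantees that the total mean is conserved, not the local mean at each $x$. Without local means agreeing, the quotient $|h|/|\xi|^s$ blows up at $\xi\to0$ for $s>1$. I would handle this in one of two ways:
\begin{enumerate}
\item \emph{Include it as a hypothesis.} The statement already says ``satisfy the condition of Theorem~\ref{thm:fourier}'', so I would read this as including $D_s(f(0),f^\infty)<\infty$, i.e.\ compatible local means, and note the point explicitly.
\item \emph{Use the theorem's own content.} The finiteness assumption of Theorem~\ref{thm:fourier} propagates in time, since the estimate is a contraction. Any moment mismatch is therefore already excluded by the hypothesis that the initial distance is finite.
\end{enumerate}

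Once the hypotheses are verified, \eqref{eq:main_estimate} with $f_2=f^\infty$ gives the first inequality directly. Since $\kappa_W(\tau)\le \frac{1}{\tau_1}\bigl[(\lambda/(\lambda+\delta^2))^s-1\bigr]=: -c<0$ uniformly in $\tau$, we get $\exp(\int_0^t\kappa_W)\le e^{-ct}\to0$.

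Finally, for each fixed $x_0$, the quotient $|\hat f(\xi,x_0,t)-\hat f^\infty(\xi,x_0)|/|\xi|^s$ is bounded by the supremum defining $D_s$. Taking the supremum over $x_0$ yields the uniform-in-$x$ limit.
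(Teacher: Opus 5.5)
Your proposal is correct and follows the same route as the paper: set $f_1=f$ and $f_2=f^\infty$, check that the pair satisfies Assumption~\ref{ass:setting}, and read both claims off Theorem~\ref{thm:fourier}, the second being just $\sup_{x_0} d_s(f(\cdot,x_0,t),f^\infty(\cdot,x_0)) = D_s(f(t),f^\infty)$. Your additional observations are accurate and make explicit what the paper's one-line proof takes for granted: that $f^\infty$ must be normalised to the local mass $\rho(x,0)$ rather than to $1$, and that for $s>1$ finiteness of $D_s(f(0),f^\infty)$ forces the local means to agree at every $x$, which only the theorem's finiteness hypothesis guarantees and conservation of the total mean does not.
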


\begin{proof}
Since $f^\infty$ solves the same equation on the same $W$, with the
same local mass and $\partial_t f^\infty=0$, the pair $(f,f^\infty)$
satisfies Assumption~\ref{ass:setting}. The claim then follows by
applying Theorem~\ref{thm:fourier} with $f_1=f$, $f_2\equiv f^\infty$.
\end{proof}

\input{numerical_section}

\section{Conclusions}

We proposed a kinetic model of opinion formation on graphon-modulated
networks under selective media influence, in which the media acts, through
a constrained model predictive control available in closed form, only on
the agents in the sphere of legitimate controversy. For the resulting
Boltzmann-type equation we studied the moment evolution, derived the quasi-invariant limit Fokker-Planck-type equation, characterised its stationary state, of piecewise Beta-type with parameters shifted by the media inside $S$, and proved
exponential convergence to equilibrium in the Fourier metric.

The numerical experiments clarify the interplay between the interaction
structure and the control. On synthetic graphon networks, consensus at the
target opinion requires either global compromise or global media reach:
when both are localised, the extreme agents are shielded from the control
by the confidence threshold and polarisation persists, while on densely
connected networks the binary interactions relay the media effect, so that
a control acting on a smaller controversy set suffices. On two real-world,
single-issue Twitter networks, initialised from the Barber\'a ideal points
so that both the interaction structure and the initial opinions are read
from the connection structure alone, without any textual analysis, the
dynamics reproduces the same qualitative picture: mediated agents
accumulate at the target inside $S$, moderate agents undergo media-induced
sign changes, and the deviance clusters at $w=\pm1$ persist. Most notably,
the asymptotic opinion distribution of both real networks is recovered, up
to Fourier distances of order $10^{-3}$, by a one-parameter Gaussian
graphon with a single calibrated bandwidth, common to the two datasets.
This indicates that the macroscopic dynamics depends on the interaction
structure essentially through its mean intensity, and makes the graphon a
cheap surrogate, with $\mathcal{O}(N)$ storage and freely scalable
population size, for opinion dynamics on real networks.

Several directions remain open. For instance, the graphon is static in
time, while real networks rewire on time scales comparable to those of
opinion formation, and the comparison with data concerns the asymptotic
state only, since the datasets provide a single temporal snapshot; a
validation against observed temporal evolutions would provide a more
stringent test. A further direction in the AI realm stands at the
interface with physically driven generative models: variational
auto-encoders and text encoders could embed textual content into a
low-dimensional opinion space on which the present dynamics acts as a
physically principled prior, and samples steered by the controlled
dynamics in the latent space could be decoded back to text, in analogy
with back-mapping in coarse-grained molecular modelling and with latent
diffusion models, yielding opinion-conditioned sampling and generation. Another interesting direction could be to consider the effect of two competing media aiming to drive agents towards two different target opinions on overlapping subsets of the sphere of controversy.

\section*{Acknowledgements} 
\begin{small}
M.F.\ and A.L.\ worked under the auspices of the Italian National Group of Mathematical Physics (GNFM) of INdAM. A.L.\ was supported by the Project Piano Nazionale di Ripresa e Resilienza -- Next Generation EU (PNRR-NGEU) from the Italian Ministry of University and Research (MUR) under Grant DM~117/2023. A.L.\ expresses his gratitude to Prof.\ Lamberto Rondoni for valuable discussions and support. Computational resources were provided by \texttt{hpc@polito}, a project of Academic Computing within the Department of Control and Computer Engineering at the Politecnico di Torino (\url{http://www.hpc.polito.it}). M.F.\ is grateful to Warwick Mathematics Institute, where a part of this research has been carried out, for the kind hospitality and thanks Prof.\ Andrea Tosin and Dr.\ Nadia Loy for their invaluable supervision, constant encouragement, and many fruitful discussions that greatly contributed to this work.
\end{small}

\appendix
\section{Appendix}
\subsection{Boundedness of post-interaction opinions}
\label{app:bounded}

We discuss sufficient conditions to ensure that the binary interaction rules preserve the admissible opinion domain $[-1,1]$.
Consider the microscopic interactions
\begin{align}
w' &= w + \gamma P(w,w_*)(w_*-w) + \eta D(w),\\
w_*' &= w_* + \gamma P(w_*,w)(w-w_*) + \eta_* D(w_*),
\end{align}
where $w,w_*\in[-1,1]$ denote the pre--interaction opinions.
We assume $0\le P(w,w_*)\le 1$,
$0<\gamma<1/2,$
and choose the diffusion coefficient of the form
$$D(w)=(1-w^2)^\alpha.$$
The compromise term can be rewritten as
$$w+\gamma P(w,w_*)(w_*-w)
=
(1-\gamma P(w,w_*))w
+
\gamma P(w,w_*)w_*.$$
Hence, 
the deterministic interaction part is a convex combination of $w$ and $w_*$ and therefore belongs to $[-1,1]$ whenever $w,w_*\in[-1,1]$.

Consequently, the possible loss of invariance is entirely due to the random fluctuation term.
In particular, choosing
$$D(w)=(1-w^2)^\alpha,
\qquad 0<\alpha<1,$$
we can rewrite
$$D(w)
=
(1-|w|)^\alpha (1+|w|)^\alpha.$$
Hence, the stochastic contribution satisfies
$$|\eta|D(w)
=
|\eta|(1-|w|)^\alpha (1+|w|)^\alpha.$$
To preserve the admissible interval uniformly with respect to $w\in[-1,1]$, it is sufficient to require
$$|\eta|D(w)\le (1-\gamma)(1-|w|).$$
Substituting the expression of $D(w)$ gives
$$|\eta|
\le
(1-\gamma)
\frac{(1-|w|)^{1-\alpha}}{(1+|w|)^\alpha}.$$
We now analyse the behaviour of the right--hand side near the boundary
$|w|\to1$.

If $\alpha<1$, then
$$(1-|w|)^{1-\alpha}\to0
\qquad \text{as } |w|\to1,$$
and therefore no strictly positive uniform bound on $|\eta|$ can exist independently of $w$. In this regime, the diffusion term vanishes too slowly near the boundaries, and support preservation may fail unless additional corrections are introduced.

If $\alpha=1$ (see \cite{toscani2006kinetic}), then
$$\frac{(1-|w|)^{1-\alpha}}{(1+|w|)^\alpha}
=
\frac{1}{1+|w|},$$
so that
$$|\eta|
\le
\frac{1-\gamma}{1+|w|}$$
and, using $1+|w|\le2$, a sufficient uniform condition is
$$|\eta|
\le
\frac{1-\gamma}{2}.$$

If $\alpha>1$, then
$$(1-|w|)^{1-\alpha}\to+\infty
\qquad \text{as } |w|\to1.$$
 Hence, the diffusion term degenerates faster near the boundaries. The preservation condition becomes even less restrictive, and any sufficiently small bounded noise amplitude is admissible.

An analogous argument can be made for $\eta_*$.
Therefore, $\alpha\ge1$
is precisely the condition ensuring the existence of a non-trivial uniform bound on the random fluctuations guaranteeing $w',w_*'\in[-1,1]$
for all binary interactions.

\printbibliography

\end{document}

%% file: numerical_section.tex
\section{Numerical Experiments}
\label{sec:numerics}

In this section we illustrate the behaviour of the model through direct Monte Carlo simulations of the microscopic system \eqref{post-int-op1}--\eqref{post-int-media}, and we quantify the extent to which a one-parameter graphon can reproduce the opinion dynamics generated by a real-world social network. This section is organised in four parts. For completeness, we recall the direct Monte Carlo scheme of Nanbu--Babovsky type in Subsection~\ref{subsec:dmc}. In Subsection~\ref{subsec:synthetic} we consider synthetic data with non-informative initial conditions and a Gaussian graphon with randomly assigned graph positions, and we investigate the qualitative evolution of the opinion density and its dependence on the control and interaction parameters. In Subsection~\ref{subsec:realdata} we instantiate the model on two real-world single-issue social networks, using the empirical network together with initial opinions estimated from data, and we examine the effect of the selective media control on the sphere of controversy $S$. Finally, in Subsection~\ref{subsec:surrogate} we show that the asymptotic opinion distribution produced by the real-world social network can be recovered by a synthetic Gaussian graphon whose single bandwidth parameter $\sigma_W$ is suitably tuned, even though the graphon is real-valued in $(0,1)$ while the empirical adjacency matrix is binary. We stress that the recovery is an equivalence at the level of the opinion marginal $f(w,t)$, not a statement of microscopic identity between the two interaction structures.

Distances between distributions are measured in the Fourier metric $d_s$ given in \eqref{eq:distance}, the metric in which the convergence Theorem~\ref{thm:fourier} is stated, which quantifies the weak convergence of probability measures in the same spirit as the Wasserstein distances of optimal transport theory.

\subsection{Direct Monte Carlo scheme}
\label{subsec:dmc}

We approximate the Boltzmann-type dynamics \eqref{eq:boltz} by a direct Monte Carlo scheme of Nanbu--Babovsky type \cite{pareschi2013BOOK}. At each step of length $\Delta t$ the agents are randomly matched in pairs; an agent is activated for binary opinion exchange with probability $\Delta t/\tau_0$ and for interaction with the media with probability $\Delta t/\tau_1$. The opinion update follows \eqref{post-int-op1}, with weight given by the graphon $W(x_i,x_j)$ in the synthetic case and by the adjacency matrix entry $A_{ij}$ in the real-world social network case; the media update applies the optimal control of Lemma~\ref{lem:wellposed_kkt}, that is, the projection of the unconstrained response onto the admissible set $S$. The diffusion is modulated by $D(w)=(1-w^2)^\alpha$ with the opinion-dependent noise truncation of Appendix~\ref{app:bounded}, ensuring $w_i(t)\in[-1,1]$ for every $t$.

\begin{algorithm}[h]
\caption{Direct Monte Carlo scheme for the controlled opinion dynamics}
\label{alg:dmc}
\begin{algorithmic}[1]
\STATE \textbf{Input:} agents $N$; horizon $T$, step $\Delta t$, $n_{\max}=\lfloor T/\Delta t\rfloor$; rates $\tau_0,\tau_1$; compromise $\gamma$; noise $\sigma$; diffusion exponent $\alpha$; media parameters $\delta,\tilde\lambda,w_d$; controversy set $S=[-r_2,-r_1]\cup[r_1,r_2]$; interaction structure (graphon $W$ with graph positions $\{x_i\}$, or adjacency matrix $A$); initial opinions $\{w_i(0)\}_{i=1}^N$.
\STATE \textbf{Output:} empirical opinion distribution $\{w_i(t)\}$ at the sampled times.
\FOR{$n=1$ \TO $n_{\max}$}
  \STATE draw a random permutation $\pi$ of $\{1,\dots,N\}$
  \FOR{each agent $i$ in parallel}
    \STATE $j\leftarrow\pi(i)$;\quad $W_{ij}\leftarrow W(x_i,x_j)$ (synthetic) \textbf{ or } $A_{ij}$ (real)
    \STATE $\Theta_i\sim\mathrm{Bernoulli}(\Delta t/\tau_0)$;\quad $P_{ij}\leftarrow \mathbf 1_{\{|w_i-w_j|\le r\}}$ or $1$
    \STATE sample $\eta_i$, $\mathbb E\eta_i=0$, $\mathrm{Var}\,\eta_i=\sigma^2$, truncated to $|\eta_i|\le(1-\gamma)(1-|w_i|)^{1-\alpha}(1+|w_i|)^{-\alpha}$
    \STATE $w_i^{\mathrm{free}}\leftarrow w_i+\Theta_i\big[\gamma\,W_{ij}P_{ij}(w_j-w_i)+\eta_i(1-w_i^2)^\alpha\big]$
    \STATE $w_m^{\mathrm{free}}\leftarrow (\tilde\lambda\,w_i^{\mathrm{free}}+\delta^2 w_d)/(\tilde\lambda+\delta^2)$
    \STATE $w_m^{*}\leftarrow \sgn(w_m^{\mathrm{free}})\,\mathrm{Proj}_{[r_1,r_2]}(|w_m^{\mathrm{free}}|)$
    \STATE $\Theta_i^{m}\sim\mathrm{Bernoulli}(\Delta t/\tau_1)$;\quad $\chi_S\leftarrow \mathbf 1_{\{w_i^{\mathrm{free}}\in S\}}$
    \STATE $w_i\leftarrow (1-\Theta_i^{m}\chi_S)\,w_i^{\mathrm{free}}+\Theta_i^{m}\chi_S\,w_m^{*}$
  \ENDFOR
  \STATE if $n$ is a sampling time, record the histogram of $\{w_i\}$
\ENDFOR
\end{algorithmic}
\end{algorithm}
\noindent Convergence to the stationary regime is identified by monitoring the plateau of the opinion variance and the stabilisation of $d_s$ between successive sampling windows; all asymptotic quantities are computed on the post-plateau window.

\subsection{Synthetic data}
\label{subsec:synthetic}

We first consider non-informative initial conditions, in which both the graph positions and the opinions are sampled uniformly,
\[
x_i\sim\mathcal U([0,1]),\qquad w_i(0)\sim\mathcal U([-1,1]),
\]
so that no structure is imposed on the initial configuration and the patterns that emerge are intrinsic to the dynamics rather than inherited from the data. The interaction is modulated by the Gaussian graphon
\begin{equation}
W(x,y)=\exp\!\left(-\frac{|x-y|^2}{\sigma_W^2}\right),
\label{eq:gaussian-graphon}
\end{equation}
evaluated at the random positions $\{x_i\}$. Since the graph position $x$ and the opinion $w$ are independent, the bandwidth $\sigma_W$ does not encode opinion homophily: it controls the mean interaction strength $\rho_W=\mathbb E_{x,y}[W(x,y)]$, a monotone function of $\sigma_W$, and it is through this scalar that the graphon enters the macroscopic dynamics. Despite the absence of initial structure, the system develops non-trivial patterns driven by the interplay of binary compromise, stochastic fluctuations and selective media influence, as shown in Figure~\ref{fig:synthetic-evolution}. The evolution typically exhibits three phases: an initial transient, an intermediate clustering regime, and a final stationary state.

Figure~\ref{fig:synthetic-evolution} also isolates the joint role of the compromise function $P$ and of the controversy set $S$. The three panels share all parameters and differ only in this pair. In panel~(a), bounded-confidence compromise combined with a strictly localised $S$ leaves the population polarised: agents whose opinions fall inside $S$ are drawn toward the target $w_d$, while the extreme agents near $w=\pm1$ neither interact with the moderate bulk, because of the confidence threshold, nor are reached by the media, because they lie outside $S$, and therefore persist at the boundaries. In panel~(b), complete interactions ($P\equiv1$) allow compromise to propagate across the whole interval, so that even the extreme agents are progressively pulled in and the distribution collapses onto a single cluster near $w_d$. In panel~(c), bounded confidence is retained but the media acts on the entire interval ($S=[-1,1]$), which again removes the protected extremes and yields consensus. Consensus thus requires either global compromise or global media reach; when both are localised, polarisation is preserved. This equivalence has a practical implication: on a densely connected network, where compromise already propagates opinions across the whole population, the media can achieve the same consensus while acting on a smaller controversy set $S$, since the binary interactions relay its effect from the mediated agents to the rest.

\begin{figure}[ht]
\centering
\includegraphics[width=\textwidth]{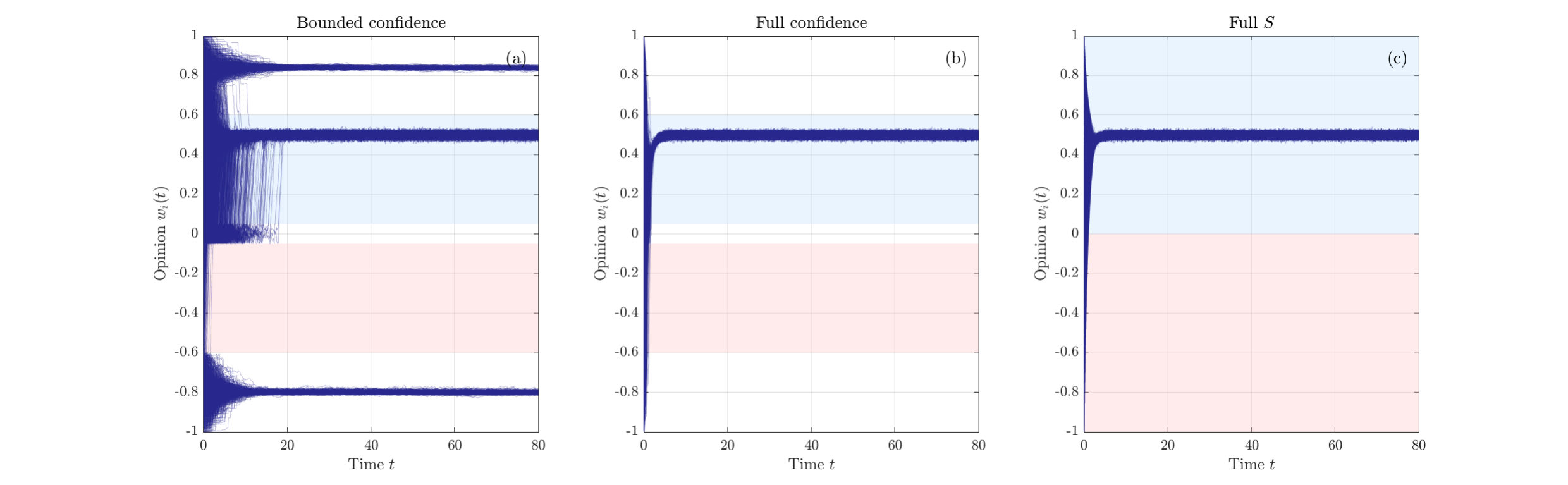}
\caption{Trajectories of the agents' opinions $w_i(t)$ for synthetic data with uniform initial conditions and the Gaussian graphon \eqref{eq:gaussian-graphon}. The shaded bands mark the controversy set $S$. The three panels share $N=4000$, $\Delta t=0.1$, $T=80$, $\tau_0=\tau_1=0.1$, $\gamma=0.4$, $\sigma=0.01$, $\alpha=2$, $\sigma_W=0.35$, confidence radius $r=0.25$, and media block $\delta=0.15$, $\tilde\lambda=0.2$, $w_d=0.5$. They differ only in the pair $(P,S)$: (a) bounded confidence with $S=[-0.6,-0.05]\cup[0.05,0.6]$; (b) complete interactions $P\equiv1$ with the same $S$; (c) bounded confidence with $S=[-1,1]$. Consensus at $w_d$ emerges in (b) and (c), whereas the localised configuration (a) preserves the polar clusters.}
\label{fig:synthetic-evolution}
\end{figure}

\subsubsection{Effect of the control and interaction parameters}
\label{subsec:param-study}
Within the synthetic setting we now isolate the role of the controversy set $S$, the media strength $\delta$, and the compromise function $P$ at the level of the asymptotic distribution. Unless otherwise specified the runs use the parameters of Figure~\ref{fig:synthetic-evolution}.

\paragraph{Role of the controversy set $S$.}
The set $S=[-r_2,-r_1]\cup[r_1,r_2]$ acts as a filter that selects which agents are exposed to the media, thereby introducing heterogeneity into the dynamics. Figure~\ref{fig:media_effect_S_bounded} reports the asymptotic distributions for three nested choices of $S$, from the full interval to a narrow band. Enlarging $S$ increases the fraction of the population reached by the control and produces a stronger concentration of opinions around the target $w_d$: the peak at $w_d$ is highest for $S=[-1,1]$ and progressively lower as $S$ shrinks, while the mass retained away from the target correspondingly grows. For the narrowest set only moderately opinionated agents are influenced, and the alignment toward $w_d$ is weakest.

\begin{figure}[ht]
\centering
\includegraphics[width=0.8\textwidth]{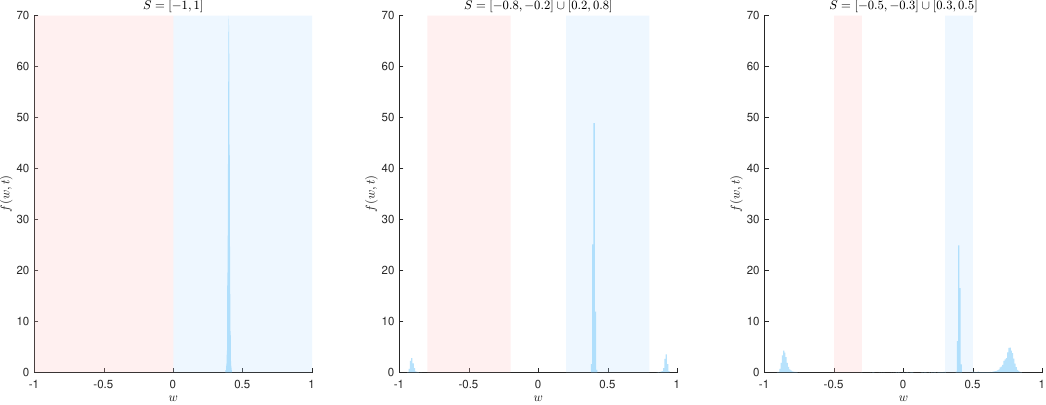}
\caption{Opinion distributions $f(w,t)$ at $t=1000$ for three nested choices of the set $S=[-r_2,-r_1]\cup[r_1,r_2]$, from the full interval $S=[-1,1]$ to $S=[-0.8,-0.2]\cup[0.2,0.8]$ and $S=[-0.5,-0.3]\cup[0.3,0.5]$, under the bounded-confidence opinion dynamics with confidence radius $r=0.2$. Parameters: $\gamma=0.4$, $\sigma=0.05$, $\sigma_W=0.2$, $\tilde\lambda=0.05$, $w_d=0.4$, $\delta=0.5$, $\alpha=2$. The peak at $w_d$ is highest for $S=[-1,1]$ and progressively lower as $S$ shrinks, leading to other clusters close to the extreme opinions.}
\label{fig:media_effect_S_bounded}
\end{figure}
\paragraph{Effect of the media strength.}
Figures~\ref{fig:media_effect_full} and~\ref{fig:media_effect_bounded} isolate the role of the media strength $\delta$, for complete and bounded-confidence interactions, respectively, comparing the opinion distribution at a short time $t=5$ with the long-time distribution at $t=1000$. Under complete interactions ($P\equiv 1$) shown in Figure~\ref{fig:media_effect_full}, a weak media leaves the transient dynamics essentially unaffected, and at $t=5$ the population aggregates near the mean of the initial distribution, as in classical consensus models. As $\delta$ increases, the agents inside $S$ concentrate at the target $w_d$ already in the transient, while the mass outside $S$ drifts toward the target without crossing $\partial S$; at $t=1000$ the distribution has collapsed onto a single narrow peak at $w_d$, whose height grows and whose width shrinks with $\delta$, showing that a stronger media produces a faster and sharper convergence of the whole population. The peak is not a Dirac mass, its finite width being sustained by the stochastic term. Under bounded confidence, shown in Figure~\ref{fig:media_effect_bounded}, the picture is different: for every value of $\delta$ the long-time distribution retains, alongside the dominant peak at $w_d$, two residual clusters near the extremes, populated by agents that the confidence threshold disconnects from the moderate bulk and that lie outside the reach of the media, in agreement with the mechanism of Figure~\ref{fig:synthetic-evolution}. Increasing $\delta$ sharpens the central peak but does not remove the extreme clusters. We refer to the distributions at $t=1000$ as numerical steady states, since residual lumps of mass adjacent to $S$ are still slowly attracted toward $w_d$ on longer time scales.
\begin{figure}[ht]
\centering
\includegraphics[width=0.8\textwidth]{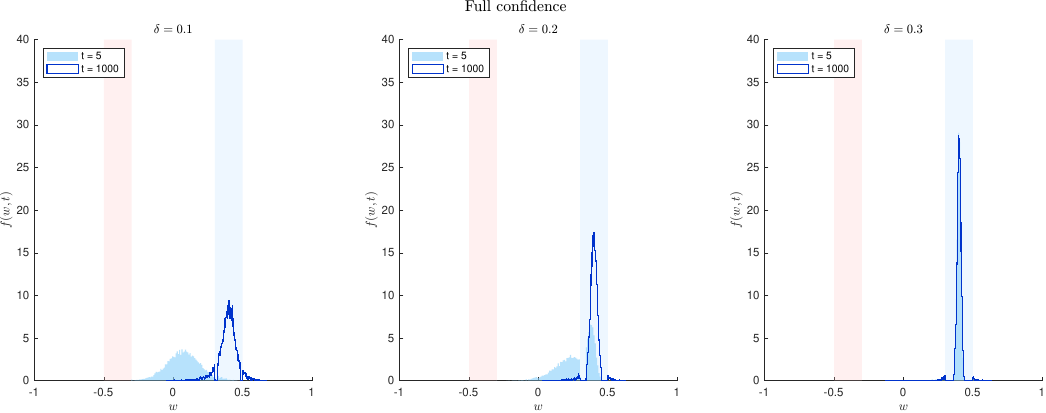}
\caption{Opinion distributions $f(w,t)$ at $t=5$ and $t=1000$ for media strengths $\delta\in\{0.1,0.2,0.3\}$ under complete interactions ($P\equiv1$). Remaining parameters: $\gamma=0.4$, $\sigma=0.05$, $\sigma_W=0.2$, $\tilde\lambda=0.05$, $w_d=0.4$, $S=[-0.5,-0.3]\cup[0.3,0.5]$, $\alpha=2$. A stronger media yields a faster and sharper concentration of the whole population at $w_d$.}
\label{fig:media_effect_full}
\end{figure}

\begin{figure}[ht]
\centering
\includegraphics[width=0.8\textwidth]{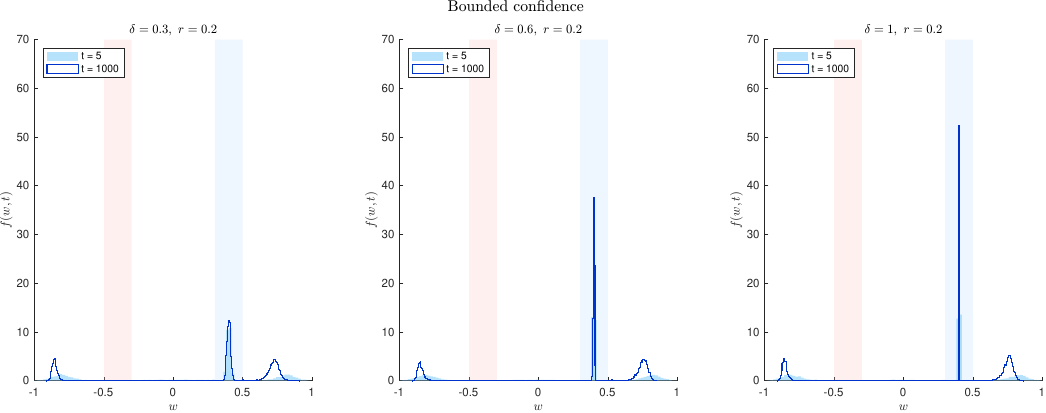}
\caption{Opinion distributions $f(w,t)$ at $t=5$ and $t=1000$ for media strengths $\delta\in\{0.3,0.6,1\}$ under bounded-confidence interactions with confidence radius $r=0.2$. Remaining parameters as in Figure~\ref{fig:media_effect_full}. The central peak at $w_d$ sharpens as $\delta$ grows, while the clusters near the extremes persist, shielded from both compromise and control.}
\label{fig:media_effect_bounded}
\end{figure}

\paragraph{Media-induced sign change.}
The control can reverse the sign of the opinion of agents that are sufficiently close to zero. Neglecting for simplicity the stochastic and the binary-interaction contributions, the media update in the unconstrained regime reads
\[
w''_{\mathrm{free}}=\frac{\tilde\lambda\,w+\delta^2 w_d}{\tilde\lambda+\delta^2},
\]
with target $w_d>0$. For an agent with $w<0$, a change of sign occurs whenever $w''_{\mathrm{free}}>0$, that is $\tilde\lambda\,w+\delta^2 w_d>0$, which gives
\[
|w|<\frac{\delta^2 w_d}{\tilde\lambda}.
\]
Agents with moderately negative opinions are thus shifted to positive opinions by the media, while those with $|w|>\delta^2 w_d/\tilde\lambda$ retain their sign. For instance, with $\delta=0.3$, $\tilde\lambda=0.2$ and $w_d=0.5$ the threshold is $\delta^2 w_d/\tilde\lambda=0.225$, so agents with initial opinion in $(-0.225,0)$ are driven toward positive opinions, whereas more radical agents remain negative. Examples how agents change the sign of their opinion can be observed in Figure \ref{fig:synthetic-evolution}.

\subsection{Real-world network data}
\label{subsec:realdata}

\paragraph{Materials.}
To probe the model on empirical interaction structures we require networks built around a single political issue, so that a one-dimensional opinion variable is meaningful. We use two such single-issue Twitter networks, concerning gun control and the Affordable Care Act (Obamacare), drawn from the polarised-debate collection of Garimella et al.\ \cite{garimella2018www,garimella2018political}, publicly available at \href{https://rutgers.box.com/s/9lavww3g0uyajkcvxwu66sxtauncq589}{rutgers.box.com}. For each issue the nodes are the users active in the debate and the edges record their interactions. The two datasets differ in size but share the same structural character; their main statistics are collected in Table~\ref{tab:network-stats}. Both networks are sparse, with fewer than $4\%$ of the possible connections present, yet the mean degree is of the order of $10^2$ interactions per user. Connectivity is strongly heterogeneous and right-skewed: the maximum degrees exceed the medians by one to two orders of magnitude, and in the Obamacare network the median out-degree is zero, so that a small core of highly active users coexists with a largely passive majority, as is typical of online debate. The out-degree tails are heavier than the in-degree ones, reflecting the asymmetry of the following relation.

\begin{table}[h]
\centering
\begin{tabular}{lrr}
\toprule
 & Gun control & Obamacare \\
\midrule
Nodes $N$                    & $3543$    & $7902$ \\
Directed edges               & $493\,872$ & $1\,056\,143$ \\
Edge density $\mathrm{nnz}(A)/N^2$ & $3.93\times10^{-2}$ & $1.69\times10^{-2}$ \\
Mean degree                  & $139.4$   & $133.7$ \\
Median out-degree / in-degree & $84$ / $77$ & $0$ / $71$ \\
Max out-degree / in-degree    & $984$ / $926$ & $2191$ / $1182$ \\
$95$th perc.\ out-degree / in-degree & $466$ / $479$ & $715$ / $498$ \\
\bottomrule
\end{tabular}
\caption{Structural statistics of the two single-issue networks. Degrees refer to the directed adjacency matrix $A$, with $A_{ij}=1$ if user $i$ follows user $j$.}
\label{tab:network-stats}
\end{table}
%

\paragraph{Initial opinions from Barber\'a ideal points.}
We initialise the opinion of each user from the political ideal-point
estimate of Barber\'a \cite{barbera2015}. The method places users in a
latent ideological space through a Bayesian spatial-following model: a
user's position is inferred from the set of political accounts they
follow, under the assumption that following is more likely between
ideologically proximate accounts. The estimator is introduced in
\cite{barbera2015} against established measures of ideology, and the
resulting one-dimensional score is a well-established proxy for political
position. We rescale the estimated ideal points linearly to the opinion
interval $[-1,1]$ and use them as the initial datum
$\{w_i(0)\}_{i=1}^N$, retaining the network topology exactly.

The natural alternative would be to recover the initial opinions from the
text of the tweets by sentiment analysis, and it is worth explaining why
the ideal points are preferable in our setting. A sentiment classifier
returns the affective valence of a message, whereas the opinion variable
$w$ is a position on an issue axis, and the two need not agree: on a
polarised issue, users on opposite sides typically both produce messages
of negative valence directed at the opposing camp, so that the sign of
the sentiment does not identify the sign of the opinion. The difficulty
is compounded by the ironic and sarcastic register characteristic of
polarised online debate, on which text classifiers are known to degrade;
the ideal point sidesteps the issue entirely, as it never reads the text.
Moreover, the ideal point is by construction a single scalar per user,
matching the agent state $w_i(t)$ of our model, whereas a text-based
estimate would require classifying and aggregating a very large corpus of
tweets per user, with the attendant computational cost and per-message
noise. Finally, the ideal point is inferred from the same object that our
model takes as primitive, namely the structure of the connections, so
that opinions and interaction structure are read from a single source.

We remark that the ideal point measures a general ideological position
rather than a stance on the specific issue. This identification is
appropriate here, since both debates are strongly aligned with the
partisan axis, but it would be weaker on an issue that cuts across party
lines. The empirical initial distributions, shown in
Figure~\ref{fig:initial-distributions}, exhibit the bimodal, polarised
shape characteristic of single-issue debate, with the two lobes
reflecting the opposing camps.

\begin{figure}[ht]
\centering
\includegraphics[width=\textwidth]{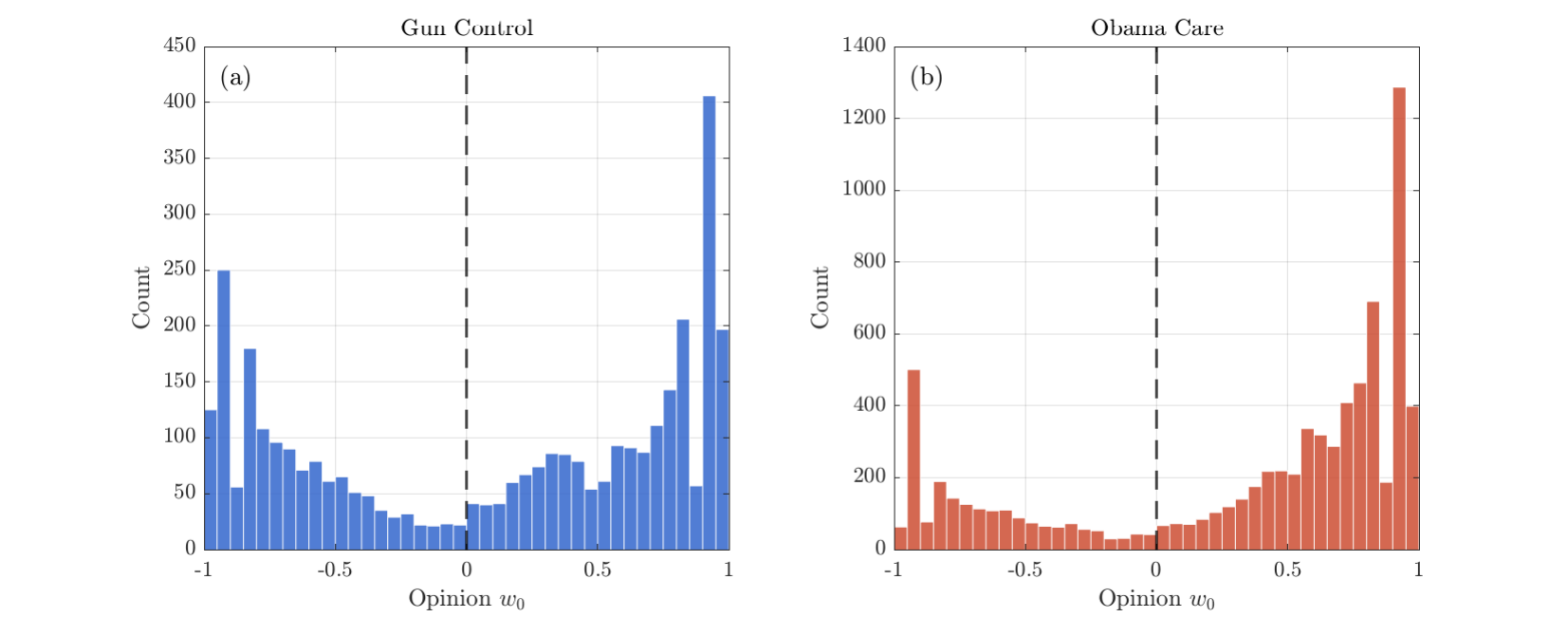}
\caption{Initial opinion distributions obtained from the normalised Barber\'a ideal points for the two single-issue datasets, (a) gun control and (b) Obamacare. Both are strongly bimodal, with mass concentrated near the two opinion poles and a depleted moderate centre.}
\label{fig:initial-distributions}
\end{figure}

\paragraph{Graph reconstruction.}

The interaction operator is assembled from the raw interaction list as a sparse binary adjacency matrix $A$, with $A_{ij}\in\{0,1\}$ recording the presence of an interaction between users $i$ and $j$, restricted to the component on which Barber\'a scores are available. The matrix is used directly as provided by the data, so that the directed structure of the empirical interactions is preserved. The binary matching of Algorithm~\ref{alg:dmc} then samples interaction partners and reads the corresponding entries $A_{ij}$. We stress that in this setting the matrix $A$ simply replaces the graphon
evaluations $W(x_i,x_j)$ in Algorithm~\ref{alg:dmc} as an empirical
interaction kernel, and is not itself a graphon \cite{franceschi2022}. Being
directed, it is not required to satisfy the symmetry assumption under which the
analytical results of the previous sections are established, and the
real-network experiments should be read as a test of the model beyond
the assumptions of the theory.

\paragraph{Controlled evolution on the real networks.}
We evolve the dynamics on each network from the Barber\'a initial datum, including the media control on the sphere of controversy $S$. The resulting trajectories are reported in Figure~\ref{fig:real-evolution}. The media acts selectively on the agents whose opinions lie in $S$, driving them toward the target $w_d$, while agents in the spheres of consensus and deviance evolve under the binary opinion dynamics alone. Two features are visible in both datasets. First, a dense band forms inside $S$ near the upper edge of the controversy region, where the mediated agents accumulate around $w_d$. Second, the trajectories display sporadic transitions from the negative pole to the positive band, the discrete-time signature of the media-induced sign change analysed above, while the most extreme agents near $w=\pm1$, lying in the sphere of deviance, retain their position throughout. The two networks behave qualitatively alike, despite their different sizes and connectivity.

\begin{figure}[ht]
\centering
\includegraphics[width=\textwidth]{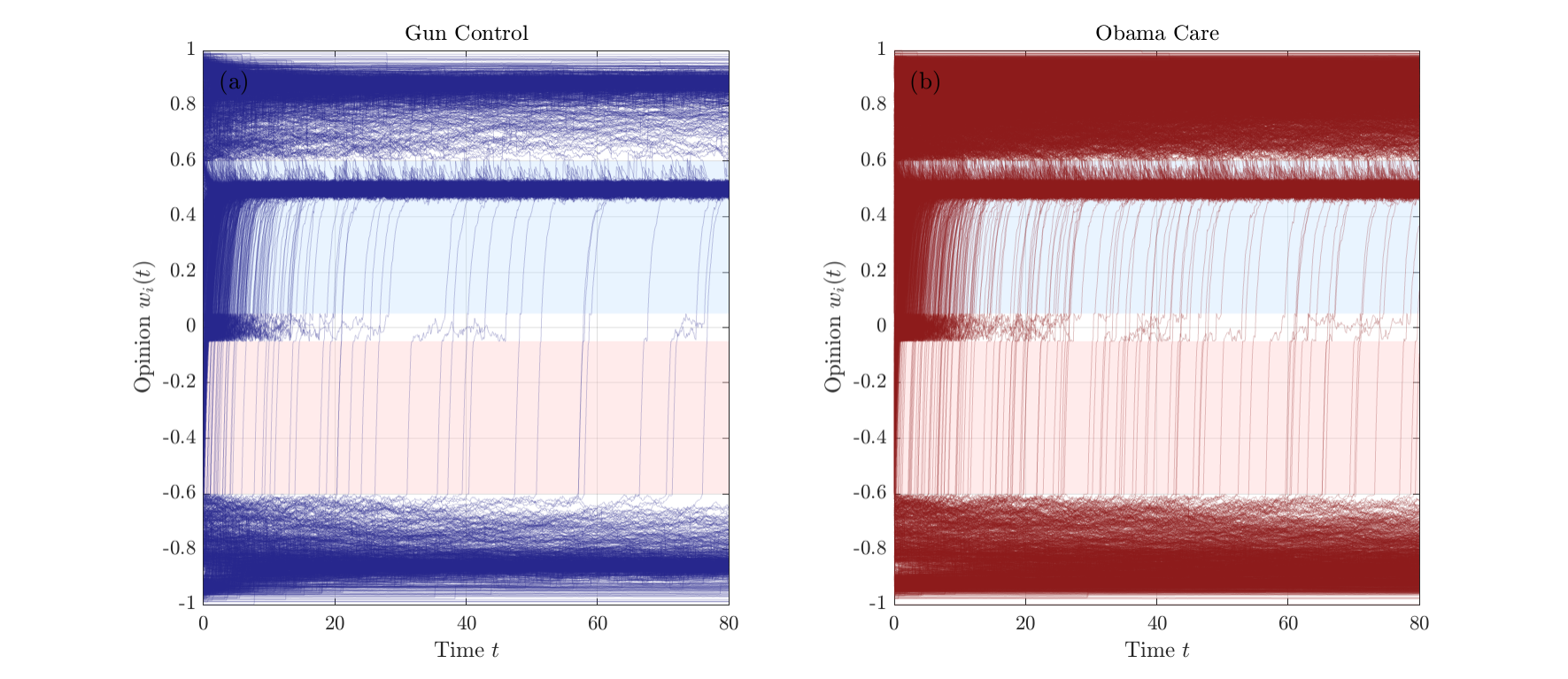}
\caption{Trajectories of the agents' opinions on the two real networks, (a) gun control and (b) Obamacare, initialised from the Barber\'a scores and subject to bounded-confidence compromise and the media control on $S=[-0.6,-0.05]\cup[0.05,0.6]$. Shared parameters: $\Delta t=0.1$, $T=80$, $\tau_0=\tau_1=0.1$, $\gamma=0.4$, $\sigma=0.01$, $\alpha=2$, confidence radius $r=0.25$, $\delta=0.15$, $\tilde\lambda=0.2$, $w_d=0.5$. The dense band inside $S$ marks the mediated agents accumulating near $w_d$; the vertical streaks are sign-change transitions; the persistent clusters near $w=\pm1$ are the deviance agents, untouched by the control.}
\label{fig:real-evolution}
\end{figure}

\subsection{Graphon surrogate: recovering the asymptotic distribution}
\label{subsec:surrogate}

We now turn to the central experiment. We run the dynamics from the same empirical initial datum $f_0$ given by the normalised Barber\'a scores, with identical opinion-space and media parameters, changing only the interaction structure: once with the real adjacency matrix $A$, and once with the Gaussian graphon \eqref{eq:gaussian-graphon} evaluated at random graph positions $x_i\sim\mathcal U([0,1])$. We then ask whether the asymptotic opinion distribution of the real network can be reproduced by the synthetic graphon for a suitable choice of the single bandwidth $\sigma_W$.

We emphasise the structural difference between the two interaction operators. The empirical adjacency matrix is binary, $A_{ij}\in\{0,1\}$, encoding the discrete presence or absence of a connection, whereas the graphon \eqref{eq:gaussian-graphon} is real-valued in $(0,1)$ and assigns a smoothly varying interaction intensity to every pair. The fact that the graphon can nonetheless match the asymptotic opinion marginal of the real network indicates that the macroscopic dynamics is insensitive to the fine structure of the interaction kernel and depends, at leading order, only on its mean intensity $\rho_W$.

To recover the real dynamics we sweep $\sigma_W$ and select, for each dataset, the value $\sigma_W^\star$ that minimises the Fourier distance $d_s$ between the asymptotic empirical distribution of the synthetic graphon and that of the real network. The minimiser is essentially the same for the two datasets, $\sigma_W^\star=0.4445$, which reflects the similar mean interaction strength of the two networks and confirms that a single scalar suffices to match the dynamics. The coincidence of the two calibrated values should be read at the resolution of the search, since the bandwidth is swept on a logarithmically spaced grid and $\sigma_W^\star$ is identified up to the local grid spacing. Within this tolerance the two minimisers are indistinguishable, consistently with the fact that the graphon enters the macroscopic dynamics only through the mean interaction strength $\rho_W(\sigma_W)$ and the two networks have similar mean connectivity. Figure~\ref{fig:surrogate-evolution} compares the full time evolution of the densities, and Figure~\ref{fig:surrogate-histograms} the resulting asymptotic distributions, at the calibrated bandwidth.

\begin{figure}[ht]
\centering
\includegraphics[width=\textwidth]{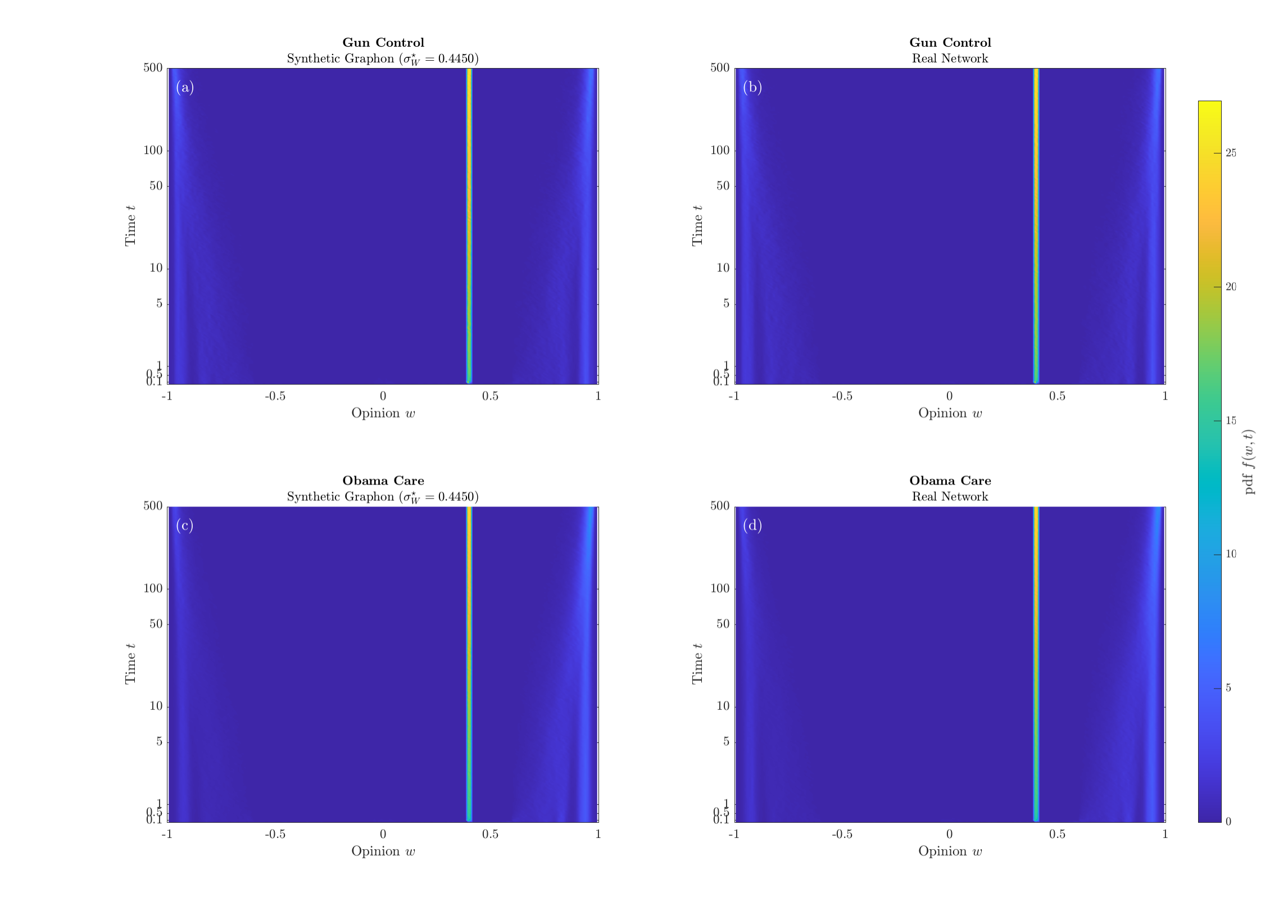}
\caption{Evolution of the opinion density $f(w,t)$ on a log-spaced time axis, comparing the calibrated synthetic graphon ($\sigma_W^\star=0.445$) with the real network: gun control (top, a synthetic and b real) and Obamacare (bottom, c synthetic and d real), all initialised from the Barber\'a scores under identical opinion-space and media parameters. The bright vertical line marks the concentration of the mediated agents at the target opinion inside $S$, and the residual mass near $w=\pm1$ the deviance agents. The synthetic and real columns are visually indistinguishable throughout the evolution, not only at the final time.}
\label{fig:surrogate-evolution}
\end{figure}

\begin{figure}[ht]
\centering
\includegraphics[width=0.92\textwidth]{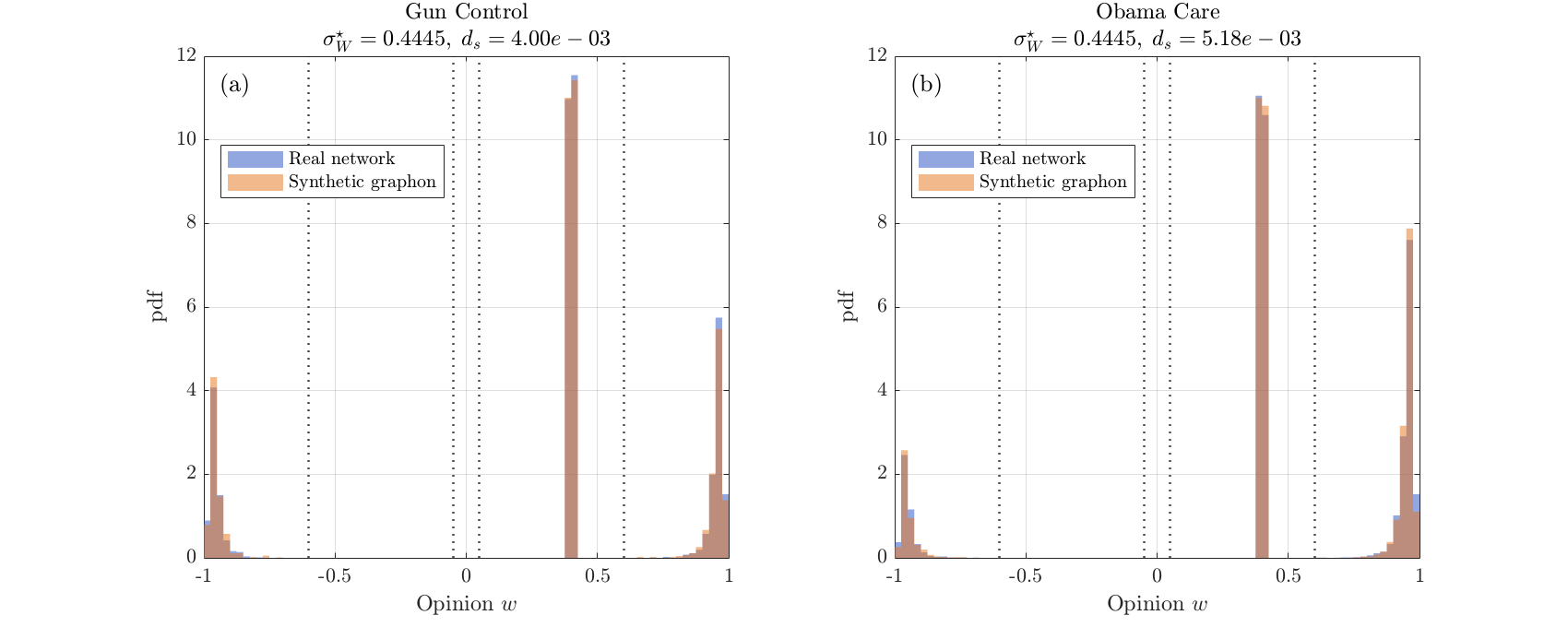}
\caption{Asymptotic opinion distributions of the calibrated synthetic graphon (orange) overlaid on those of the real network (blue), for (a) gun control and (b) Obamacare; dotted lines mark $w=0$ and the boundaries of $S$. The two distributions overlap almost exactly, with terminal Fourier distance $d_s=4.00\times10^{-3}$ for gun control and $d_s=5.18\times10^{-3}$ for Obamacare, both at the recovered bandwidth $\sigma_W^\star=0.4445$. Each distribution concentrates at the media target inside $S$, with residual clusters near the two poles corresponding to the deviance agents.}
\label{fig:surrogate-histograms}
\end{figure}

The experiment supports the following reading. Once the opinion-space mechanisms of compromise, diffusion, and media control on $S$ are fixed, the detailed topology of the interaction network influences the asymptotic opinion distribution essentially through its mean interaction strength, and a one-parameter Gaussian graphon, tuned to match that single statistic, recovers the macroscopic dynamics of the real network despite its structurally different, binary connectivity. This has a concrete methodological consequence: the surrogate requires only $\mathcal{O}(N)$ storage and no construction of the interaction network, while allowing the population size to be increased freely for smoother densities. The equivalence is at the level of the observable $f(w,t)$; on the real network, where the degree is heterogeneous, the position-resolved law is a mixture over agents of different connectivity, and the surrogate matches its opinion marginal rather than its microscopic structure.